\newcommand{\U}{\;\mathcal{U}\;}
\newcommand{\sema}[2]{\llbracket #1 \rrbracket_{#2}}
\newcommand{\sem}[1]{\llbracket #1 \rrbracket}
\newcommand{\bfup}[1]{\textup{\textbf{#1}}}
\newcommand{\<}{\sqsubseteq}
\newcommand{\lra}{\longrightarrow}
\newcommand{\ps}{\mathcal{P}}
\newcommand{\Rel}{\textup{\textbf{Rel}}}
\newcommand{\rel}[1]{\mathrm{Rel}(#1)}
\newcommand{\pred}[1]{\mathrm{Pred}(#1)}
\newcommand{\Sets}{\textup{\textbf{Sets}}}
\begin{document}

\title{Reflection and preservation of properties in coalgebraic
(bi)simulations\thanks{Research supported by the Spanish
projects DESAFIOS TIN2006-15660-C02-01, WEST TIN2006-15578-C02-01 and PROMESAS
S-0505/TIC/0407}}

\author{Ignacio F\'abregas \and Miguel Palomino \and David de
Frutos Escrig}
\institute{Departamento de Sistemas Inform\'aticos y
Computaci\'on,  UCM\\
\email{fabregas@fdi.ucm.es \quad \{miguelpt, defrutos\}@sip.ucm.es}}

\maketitle

\begin{abstract}
Our objective is to extend the standard results of preservation
and reflection of properties by bisimulations to the coalgebraic
setting, as well as to study under what conditions these results
hold for simulations. The notion of bisimulation is the classical
one, while for simulations we use that proposed by Hughes and
Jacobs. As for properties, we start by using a generalization of
linear temporal logic to arbitrary coalgebras suggested by Jacobs,
and then an extension by Kurtz which includes atomic propositions
too.
\end{abstract}

\section{Introduction}

To reason about computational systems it is customary to mathematically
formalize them by means of state-based structures such as labelled
transitions systems or Kripke structures. This is a fruitful approach since it
allows to study the properties of a system by relating it to some other,
possibly better-known system, by means of simulations and bisimulations
(see e.g., \cite{Milner89,LoiseauxEtAl95,KestenPnueli00,Clarke99}).

The range of structures used to formalize computational systems is
quite wide. In this context, coalgebras have emerged with a unifying aim \cite
{Rutten00}. A coalgebra is simply a function $c : X \longrightarrow FX$, where
$X$ is the set of states and $FX$ is some expression on $X$ (a functor) that
describes the possible outcomes of a transition from a given state.
Choosing different expressions for $F$ one can obtain coalgebras that
correspond to transition systems, Kripke structures, \dots

Coalgebras can also be related by means of (bi)simulations.
Our goal in this paper is to prove that, like their concrete
instantiations, (bi)simulations between arbitrary coalgebras preserve some
interesting properties. A first step in this direction consists in choosing an
appropriate notion for both bisimulation and simulation, as well as a logic in
which to express these properties.

Bisimulations were originally introduced by Aczel and Mendler
\cite{AczelMendler89}, who showed that the general definition
coincided with the standard ones when particularized; it is an established
notion. Simulations, on the other hand, were defined by Hughes and Jacobs
\cite{HughesJacobs04} and lack such canonicity. Their notion of simulation
depends on the use of orders that allow (perhaps too) much flexibility in what
it can be considered as a simulation; in order to show that simulations preserve
properties, we will have to impose certain restrictions on such orders.
As for the logic used for the properties, there is likewise no canonical
choice at the moment. Jacobs proposes a temporal logic (see \cite{Jacobs07})
that generalizes linear temporal logic (LTL), though without atomic
propositions; a clever insight of Pattinson \cite{Pattinson-thesis} provides us
with a way to endow Jacobs' logic with atomic propositions.

Since our original motivation was the generalization of the results
about simulations and preservation of LTL properties, we will focus on Jacobs'
logic and its extension with atomic propositions. Actually, modal logic seems to
be the right logic to express properties of coalgebras and several proposals
have been made in this direction, among them those in
\cite{Jacobs00,Kurz-thesis,Pattinson-thesis}, which are invariant under
behavioral equivalence. The reason for studying preservation/reflection of
properties by bisimulations here is twofold: on the one hand, some of the
operators in Jacobs' logic do not seem to fall under the framework of those
general proposals; on the other hand, some of the ideas and insights developed
for that study are needed when tackling simulations.
As far as we know, reflection of properties by simulations in coalgebras
has not been considered before in the literature.

\section{Preliminaries}

In this section we summarize definitions and concepts
from \cite{HughesJacobs04,JacobsRutten97,Jacobs07}, and introduce the
notation we are going to use.

Given a category $\mathbb{C}$ and an endofunctor $F$ in $\mathbb{C}$, an
$F$-coalgebra, or just a coalgebra, consists of an object $X\in\mathbb{C}$
together with a morphism $c:X\lra FX$. We often call $X$ the state space and
$c$ the transition or coalgebra structure.

\begin{example}
We show how two well-known structures can be seen as coalgebras:
\begin{itemize}
\item Labelled transition systems are coalgebras for the functor
$F=\ps(id)^A$, where $A$ is the set of labels.

\item Kripke structures are coalgebras for the functor
$F=\ps(AP)\times\ps(id)$, where $AP$ is a set of atomic propositions.
\end{itemize}
\end{example}

It is well-known that an arbitrary endofunctor $F$ on $\Sets$ can be lifted to
a functor in the category $\Rel$ of relations, that is, $\rel{F}:\Rel\lra\Rel$.
Given a relation $R\subseteq X\times Y$, its lifting is defined by

\[
\rel{F}(R) = \{ \langle u,v \rangle \in FX_1 \times FX_2 \mid
                \exists w\in F(R).\, F(r_1)(w) = u, F(r_2)(w) = v \}\; ,
\]

\noindent where $r_i:R\lra X_i$ are the projection morphisms.

A predicate $P$ of a coalgebra $c:X\lra FX$ is just a subset of the
state space. Also, a predicate $P\subseteq X$ can be lifted to a functor
structure using the relation lifting:

\begin{center}
$\pred{F}(P) = \coprod_{\pi_1}(\rel{F}(\coprod_{\delta}(P)))=
\coprod_{\pi_2}(\rel{F}(\coprod_{\delta}(P)))$,
\end{center}

\noindent where $\delta=\langle id,id\rangle$ and $\coprod_f(X)$ is the image
of $X$ under $f$, so $\coprod_{\delta_{x}}(P)=\{(x,x)\mid x\in
P\}$, $\coprod_{\pi_1}(R)=\{x_1\mid \exists x_2. x_1Rx_2\}$ is the domain of the
relation $R$, and $\coprod_{\pi_2}(R)=\{x_2\mid \exists x_1. x_1Rx_2\}$ is its
codomain.

The class of polynomial endofunctors is defined as the
least class of endofunctors on $\Sets$ such that it contains the
identity and constant functors, and is closed
under product, coproduct, constant exponentiation, powerset and
finite sequences. For polynomial endofunctors, $\rel{F}$ and $\pred{F}$ can be
defined by induction on the structure of $F$. For further details
on these definitions see \cite{Jacobs07}; we will introduce
some of those when needed. For example, for the cases
of labelled transition systems and Kripke structures we have:
\vspace{-.2cm}
\begin{eqnarray*}
\rel{\ps(id)^A}(R) & = \{(f,g)\mid \forall a\in A.\, (f(a),g(a))\in \{(U,V)
\mid
   & \forall u\in U.\, \exists v\in V.\, uRv \land \\
 & & \forall v\in V.\, \exists u\in U.\, uRv \}\}
\end{eqnarray*}
\vspace{-.9cm}
\begin{eqnarray*}
\pred{\ps(id)^A}(P) = \{f\mid\forall a\in A.\, f(a)\in \{U \mid \forall u\in
U.\, Pu \}\}
\end{eqnarray*}
\vspace{-.9cm}
\begin{eqnarray*}
\rel{\ps(AP)\times\ps(id)}(R) = & \{((u_1,u_2),(v_1,v_2)) & \mid (u_1=v_1.\,
u_1,v_1\in\ps(AP)) \land\\
& \phantom{\{((}(u_2,v_2)\in \{(U,V) & \mid \forall u\in U.\, \exists v\in V.\,
uRv \land \\
& & \phantom{\forall}\forall v\in V.\, \exists u\in U.\, uRv \}\}
\end{eqnarray*}
\vspace{-.9cm}
\begin{eqnarray*}
\pred{\ps(AP)\times\ps(id)}(P) = \{(u,v)\mid\ (u\subseteq\ps(AP)) \land (v\in
\{U \mid \forall u\in U.\, Pu )\}
\end{eqnarray*}

A bisimulation for coalgebras $c : X\lra FX$ and $d:Y \lra FY$ is a relation
$R\subseteq X\times Y$ which is ``closed under $c$ and $d$'':
\[
\textrm{if $(x,y) \in R$ then $(c(x), d(y)) \in \rel{F}(R)$}\,.
\]

\noindent In the same way, an invariant for a coalgebra $c : X\lra FX$ is a
predicate
$P\subseteq X$ such that it is ``closed under $c$'', that is, if $x \in
P$ then $c(x) \in \pred{F}(P)$.

We will use the definition of simulation introduced by
Hughes and Jacobs in \cite{HughesJacobs04} which uses an order $\<$ for
functors $F$ that makes the following diagram commute

\[
\xymatrix@R=4.0ex{
  & {\bfup{PreOrd}}\ar[d]^{\mathit{forget}} \\
 {\Sets}\ar[ur]^{\<} \ar[r]_{F} & \Sets
}
\]

\noindent Given an order $\<$ on $F$, a simulation for the
coalgebras $c: X\lra FX$ and $d: Y\lra FY$ is a relation $R\subseteq X\times Y$
such that
\[
\textrm{if $(x,y) \in R$ then $(c(x), d(y)) \in \rel{F}_\<(R)$}\,,
\]
where $\rel{F}_{\<}(R)$ is defined as
\[
\rel{F}_{\<}(R)=\{(u,v)\mid\exists w\in F(R).\; u\<
Fr_1(w)\wedge
Fr_2(w)\< v\} \,.
\]

To express properties we will use a generalization of LTL proposed by Jacobs
(see \cite{Jacobs07}) that applies to arbitrary coalgebras, whose formulas
are given by the following BNF expression:
\begin{displaymath}
\varphi= P\subseteq X\mid \neg \varphi \mid \varphi\vee \varphi
\mid \varphi\wedge \varphi \mid \varphi\Rightarrow \varphi \mid
\bigcirc \varphi \mid \Diamond \varphi \mid \Box \varphi \mid
\varphi \U \varphi
\end{displaymath}

\noindent $\bigcirc$ is the \textit{nexttime} operator and its semantics
(abusing notation) is defined as $\bigcirc P= c^{-1}(\pred{F}(P)) = \{x\in X\mid
c(x)\in \pred{F}(P)\}$; $\Box$ is the \textit{henceforth} operator defined as
$\Box P$ if exists an invariant for $c$, such that $Q\subseteq
P$ with $x\in Q$ or, equivalently by means of the greatest fixed point
$\nu$, $\Box P = \nu S.(P\wedge\bigcirc S)$; $\Diamond$ is the
\textit{eventually} operator defined as $\Diamond P=\neg\Box\neg P$; and $\U$ is
the \textit{until} operator defined as $P\U Q= \mu
S.(Q\vee(P\wedge\neg\bigcirc\neg S))$, where $\mu$ is the least fixed point.

We denote the set of states in $X$ that satisfies $\varphi$
as $\sema{\varphi}{X}$. That is, if $P\subseteq X$ is a predicate,
then $\sema{P}{X}=P$; if $\alpha\in\{\neg,\bigcirc,\Box,\Diamond\}$ then
$\sema{\alpha\varphi}{X}=\alpha\sema{\varphi}{X}$, and if
$\beta\in\{\wedge,\vee,\Rightarrow,\U\}$ then
$\sema{\varphi_1\beta\varphi_2}{X}=\sema{\varphi_1}{X}\beta\sema{\varphi_2}{X}
$. We will usually omit the reference to the set $X$ when it is
clear from the context. We say that an element $x$ satisfies a formula
$\varphi$, and we denote it by $c,x\models \varphi$, when $x\in
\sem{\varphi}$. Again, we will usually omit the reference to the coalgebra $c$.

\section{Reflection and preservation in bisimulations}

These definitions of reflection and preservation are slightly more involved
than for classical LTL because the logic proposed by Jacobs does not use atomic
propositions, but predicates (subsets of the set of states). Later, we will see
how atomic propositions can be introduced in the logic.

Given a predicate $P$ on $X$ and a binary relation $R\subseteq
X\times Y$, we will say that an element $y\in Y$ is in the direct
image of $P$, and we will denote it by $y\in RP$, if there
exists $x\in X$ with $x\in P$ and $x R y$. The inverse image of $R$ is just the
direct
image for the relation $R^{-1}$.

\begin{definition}
Given two formulas $\varphi$ on $X$ and $\psi$ on $Y$, built over
predicates $P_1,\dots P_n$ and $Q_1,\dots Q_n$, respectively, and
a binary relation $R\subseteq X\times Y$, we define the image of
$\varphi$ as a formula $\varphi^{*}$ on $Y$, obtained by substituting in
$\varphi$ $RP_i$ for $P_i$. Likewise, we build $\psi^{-1}$,
the inverse of $\psi$, substituting $R^{-1}Q_i$ for $Q_i$ in $\psi$.
\end{definition}

\begin{remark}\label{remark}
It is important to notice that $\varphi^*$ coincides with $\varphi^{-1}$ when
we consider $R^{-1}$ instead of $R$. Analogously, $\varphi^{-1}$ is
just $\varphi^{*}$ when we consider $R^{-1}$ instead of $R$.
\end{remark}

Now we can define when a relation preserves or reflects
properties.

\begin{definition}
Let $R\subseteq X\times Y$ be a binary relation and $a$ and $b$
elements such that $a R b$. We say that $R$ \emph{preserves}
the property $\varphi$ on $X$ if, whenever $a\models \varphi$, $b\models
\varphi^{*}$. We say that $R$ \emph{reflects} the property
$\varphi$ on $Y$ if $b\models \varphi$ implies $a\models
\varphi^{-1}$.
\end{definition}

Let us first state a couple of technical lemmas whose proofs appear in
\cite{FdFP07d}.

\begin{lemma}\label{lem-dir}
Let $F$ be a polynomial functor, $R\subseteq X\times Y$ a
bisimulation between coalgebras $c:X \lra FX$ and $d:Y\lra FY$,
$P\subseteq Y$, $Q\subseteq X$  and $xRy$. If $d(y)\in\pred{F}(P)$,
then $c(x)\in\pred{F}(R^{-1}P)$; and if $c(x)\in\pred{F}(Q)$, then
$d(y)\in\pred{F}(RQ)$.
\end{lemma}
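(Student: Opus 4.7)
My plan is to proceed by induction on the structure of the polynomial functor $F$. Before attacking the cases, observe that the two implications are symmetric: if we swap the roles of $c,d$ and $x,y$, then $R^{-1}$ is a bisimulation between $d$ and $c$, and the second implication becomes an instance of the first applied to $R^{-1}$ (cf.\ Remark~\ref{remark}). Thus it is enough to prove: if $d(y)\in\pred{F}(P)$, then $c(x)\in\pred{F}(R^{-1}P)$.

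The engine of the whole argument is that, since $R$ is a bisimulation and $xRy$, we have $(c(x),d(y))\in\rel{F}(R)$; unfolding the relation lifting, there exists $w\in F(R)$ such that $F(r_1)(w)=c(x)$ and $F(r_2)(w)=d(y)$. Each inductive step will consist in decomposing this witness $w$ according to the shape of $F$ and applying the induction hypothesis to the smaller components, thereby matching it up against the (analogously decomposed) definition of $\pred{F}$.

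The base cases $F=\mathrm{Id}$ and $F$ constant are immediate: for the identity, $c(x)\,R\,d(y)$ with $d(y)\in P$ gives $c(x)\in R^{-1}P$ directly; for constants, $\pred{F}(P)$ does not depend on $P$. For products, coproducts, constant exponentiation and finite sequences, both $\rel{F}$ and $\pred{F}$ are defined componentwise, so the claim reduces by the induction hypothesis to the corresponding statements on the smaller functors.

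The step I expect to be the main obstacle, as usual when lifting properties along polynomial functors, is the powerset case $F=\ps(G)$. Here the relation lifting $\rel{\ps(G)}(R)$ carries two symmetric forall/exists conditions, while the predicate lifting $\pred{\ps(G)}(P)$ requires every element of the set to lie in $\pred{G}(P)$. Writing $c(x)=U$ and $d(y)=V$, the hypothesis $V\in\pred{\ps(G)}(P)$ yields $v\in\pred{G}(P)$ for each $v\in V$; the bisimulation condition supplies, for every $u\in U$, some $v\in V$ with $(u,v)\in\rel{G}(R)$; and the induction hypothesis applied to $G$ then produces $u\in\pred{G}(R^{-1}P)$, whence $U\in\pred{\ps(G)}(R^{-1}P)$, as desired.
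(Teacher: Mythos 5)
Your proof is correct and takes essentially the route the paper intends: the paper defers the details to the extended version \cite{FdFP07d}, but its Proposition~\ref{p-alzamientos1} is precisely your relational induction statement (proved there by structural induction on $F$), and reducing the second implication to the first via the bisimulation $R^{-1}$ is the same device the paper uses for Lemma~\ref{lem-invariant}. One presentational point: make explicit that the structural induction is carried out on the purely relational claim ``$(u,v)\in\rel{F}(R)$ and $v\in\pred{F}(P)$ imply $u\in\pred{F}(R^{-1}P)$'' rather than on the lemma as stated, since in the powerset case you apply the inductive hypothesis to pairs $(u,v)\in\rel{G}(R)$ that are not of the form $(c(x),d(y))$ for any coalgebras --- your argument already does exactly this, so it only needs to be said.
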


Another auxiliary lemma we need to prove the main result of this section is the
following:

\begin{lemma}\label{lem-invariant}
The direct and inverse images of an invariant are also invariants.
\end{lemma}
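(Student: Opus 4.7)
The plan is to derive both halves of the statement directly from Lemma~\ref{lem-dir}, which is precisely the transfer principle for $\pred{F}$ along a bisimulation $R$. The only thing to check is that the hypothesis of that lemma is exactly what the invariant condition supplies at each witness point.

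For the direct image, fix a bisimulation $R\subseteq X\times Y$ between $c:X\lra FX$ and $d:Y\lra FY$, and let $Q\subseteq X$ be an invariant for $c$. Given any $y\in RQ$, unfolding the definition of the direct image yields some $x\in Q$ with $xRy$; since $Q$ is an invariant, $c(x)\in\pred{F}(Q)$. Applying the second implication of Lemma~\ref{lem-dir} to this pair $(x,y)$ and to $Q$ gives $d(y)\in\pred{F}(RQ)$, which is exactly the invariant condition for $RQ$ under $d$. The inverse-image case is dual: given an invariant $P\subseteq Y$ for $d$ and any $x\in R^{-1}P$, pick $y\in P$ with $xRy$; then $d(y)\in\pred{F}(P)$, and the first implication of Lemma~\ref{lem-dir} delivers $c(x)\in\pred{F}(R^{-1}P)$. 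Alternatively, since $R^{-1}$ is again a bisimulation, Remark~\ref{remark} reduces the second claim to the first, so only one direction truly requires an argument.

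Given Lemma~\ref{lem-dir}, there is essentially no obstacle: the statement is just a repackaging of that lemma applied to the single witness responsible for membership in $RQ$ or $R^{-1}P$. Any real difficulty is pushed into the inductive proof of Lemma~\ref{lem-dir} over the structure of the polynomial functor $F$, which is where the compatibility of $\pred{F}$ with direct and inverse images along a bisimulation is actually established.
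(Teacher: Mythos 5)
Your proof is correct and follows essentially the same route as the paper: for the direct image you apply the second half of Lemma~\ref{lem-dir} to the witness $x\in Q$ with $xRy$ supplied by the definition of $RQ$, and you handle the inverse image either directly via the first half of that lemma or by observing that $R^{-1}$ is again a bisimulation, which is exactly the paper's argument. The citation of Remark~\ref{remark} is unnecessary (that remark concerns images of formulas, not of predicates), but this does not affect the correctness of the proof.
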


\begin{proof}
Let $R$ be a bisimulation between $c:X\lra FX$ and $d:Y\lra FY$.
Let us suppose that $P\subseteq X$ is an invariant and let us prove
that so is $R P$; that is, for all $y\in R P$ it must be the case
that $d(y)\in\pred{F}(R P)$. If $y\in R P$, then
there exists $x\in P$ such that $xR y$. Since $P$ is an invariant, we
also have $c(x)\in\pred{F}(P)$ and by Lemma \ref{lem-dir} we get
$d(y)\in\pred{F}(R P)$.

On the other hand, since $R^{-1}$ is also a bisimulation, the
inverse image of an invariant is an invariant too. \qed
\end{proof}

At this point it is interesting to recall that our
objective is to prove that bisimulations preserve and reflect
properties of a temporal logic, that is, if we have $xR y$ and
$y\models\varphi$ then we must also have $x\models\varphi^{-1}$;
and, analogously, if $x\models\varphi$ then $y\models\varphi^{*}$. We will
show this result for all temporal operators except for the negation; it is
well-known that negation is reflected and preserved by standard
bisimulations, but not here because of the lack of atomic propositions in the
coalgebraic temporal logic.

To prove the result for the rest of temporal operators, we will see
that if $y\in\sem{\varphi}$ then we also have $x\in
R^{-1}\sem{\varphi}$ and, analogously, if $x\in\sem{\varphi}$ then
$y\in R\sem{\varphi}$. Ideally, we would like to have both $R^{-1}\sem{\varphi}=
\sem{\varphi^{-1}}$ and $R\sem{\varphi}=\sem{\varphi^{*}}$ but, in
general, only the inclusion $\subseteq$ is true. Fortunately this
is enough to prove our propositions, since the temporal operators are
all monotonic except for the negation. In fact, here is where the
problem with negation appears.

\begin{lemma}[\cite{FdFP07d}]\label{l-formula inversa}
Let $R$ be a bisimulation between coalgebras $c:X\lra FX$ and
$d:Y\lra FY$. For all temporal formulas $\varphi$ and $\psi$ which do not
contain the negation operator, it follows that
\[R^{-1}\sema{\varphi}{Y}\subseteq\sema{\varphi^{-1}}{X}\quad
\textrm{and}\quad R\sema{\psi}{X}\subseteq\sema{\psi^{*}}{Y}\,.\]
\end{lemma}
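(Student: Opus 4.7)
The plan is to proceed by structural induction on $\varphi$. By Remark \ref{remark}, swapping $R$ and $R^{-1}$ turns one of the two inclusions into the other, and since $R^{-1}$ is also a bisimulation whenever $R$ is, it suffices to establish only $R^{-1}\sema{\varphi}{Y}\subseteq\sema{\varphi^{-1}}{X}$ for every negation-free $\varphi$; the statement about $\psi^{*}$ follows automatically.

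The base case $\varphi=P$ is immediate because $\sema{P}{Y}=P$ and $P^{-1}=R^{-1}P$ by definition. For the Boolean connectives $\wedge$ and $\vee$ (the connective $\Rightarrow$ implicitly encodes a negation and is therefore ruled out), the standard set-theoretic facts $R^{-1}(A\cap B)\subseteq R^{-1}A\cap R^{-1}B$ and $R^{-1}(A\cup B)=R^{-1}A\cup R^{-1}B$, combined with the inductive hypothesis, yield the desired inclusion. The case $\bigcirc\varphi$ is the first to use the bisimulation hypothesis: if $xRy$ and $d(y)\in\pred{F}(\sema{\varphi}{Y})$, Lemma \ref{lem-dir} gives $c(x)\in\pred{F}(R^{-1}\sema{\varphi}{Y})$, so the inductive hypothesis together with monotonicity of $\pred{F}$ yields $c(x)\in\pred{F}(\sema{\varphi^{-1}}{X})$, i.e.\ $x\in\sema{\bigcirc\varphi^{-1}}{X}$.

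For $\Box\varphi$ I would use the characterization of $\sema{\Box\varphi}{Y}$ as the union of all invariants of $d$ contained in $\sema{\varphi}{Y}$. Since $R^{-1}$ commutes with unions, Lemma \ref{lem-invariant} ensures that $R^{-1}\sema{\Box\varphi}{Y}$ is itself a union of invariants of $c$, each of which, by the inductive hypothesis, is contained in $\sema{\varphi^{-1}}{X}$; hence $R^{-1}\sema{\Box\varphi}{Y}$ lies in the greatest invariant below $\sema{\varphi^{-1}}{X}$, which is exactly $\sema{\Box\varphi^{-1}}{X}$.

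The delicate cases are $\Diamond$ and $\U$, whose fixed-point definitions hide the dual next operator $\neg\bigcirc\neg$ (``exists a successor in\dots''). I would handle them by transfinite approximation of the least fixed point and induction on the approximation stage, using the inductive hypothesis on the direct arguments plus a ``dual'' counterpart of Lemma \ref{lem-dir} stating that if $xRy$ and $c(x)$ has an existential successor in a predicate $Q$ on $X$, then $d(y)$ has an existential successor in $RQ$. This dual statement is where I expect the main obstacle to lie, since Lemma \ref{lem-dir} in its given form handles only the universal next; it should, however, follow by a direct zigzag argument on the definition of $\rel{F}(R)$, exploiting the witness $w\in F(R)$ in both directions, after which the ordinal induction proceeds uniformly for $\Diamond$ and $\U$.
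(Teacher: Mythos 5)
Your proof is correct and follows the route the paper intends: the paper itself defers the details of this lemma to \cite{FdFP07d}, but its surrounding discussion makes clear that the argument is a structural induction on the formula driven by Lemmas \ref{lem-dir} and \ref{lem-invariant}, with preservation reduced to reflection via the fact that $R^{-1}$ is again a bisimulation, exactly as you do. The one step you flag as a possible obstacle --- the existential-next dual of Lemma \ref{lem-dir} needed for $\Diamond$ and $\U$ --- does go through, and in fact requires no fresh induction on $F$: if $c(x)\in\pred{F}(X\setminus R^{-1}S)$ then the second half of Lemma \ref{lem-dir} gives $d(y)\in\pred{F}(R(X\setminus R^{-1}S))$, and since $R(X\setminus R^{-1}S)\subseteq Y\setminus S$ (any $y'$ related to some $x'\notin R^{-1}S$ cannot lie in $S$), monotonicity of $\pred{F}$ yields $d(y)\in\pred{F}(Y\setminus S)$; contraposing gives precisely the dual statement, after which your transfinite approximation of the least fixed points is routine. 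Your exclusion of $\Rightarrow$ is also the right reading: the lemma's wording bans only $\neg$, but $\Rightarrow$ is antitone in its first argument and the inclusion genuinely fails for it, so it must be understood as excluded along with negation (consistent with the paper's remark that the argument relies on monotonicity of the remaining operators).
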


Finally we can show that bisimulations reflect and preserve
properties given by any temporal operator except for the negation.

\begin{proposition}
Let $\psi$ be a formula over a set $Y$ which does not use the
\emph{negation} operator and let $R$ be a bisimulation between
coalgebras $c:X\lra FX$ and $d:Y\lra FY$. Then the property $\psi$
is reflected by $R$.
\end{proposition}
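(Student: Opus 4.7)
The plan is to observe that Lemma \ref{l-formula inversa} has already done all of the structural work, so the proposition reduces to a routine unfolding of definitions. Concretely, I would fix arbitrary $x \in X$ and $y \in Y$ with $x R y$, assume $y \models \psi$, and aim to conclude $x \models \psi^{-1}$. By the definition of $\models$, the hypothesis $y \models \psi$ is just $y \in \sema{\psi}{Y}$, and the goal $x \models \psi^{-1}$ is just $x \in \sema{\psi^{-1}}{X}$.

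From $x R y$ and $y \in \sema{\psi}{Y}$, the definition of direct/inverse image given before Definition~2 yields $x \in R^{-1}\sema{\psi}{Y}$. Since $\psi$ contains no occurrences of negation, Lemma \ref{l-formula inversa} applies and gives
\[
R^{-1}\sema{\psi}{Y}\subseteq \sema{\psi^{-1}}{X},
\]
so $x \in \sema{\psi^{-1}}{X}$, i.e., $x \models \psi^{-1}$. This finishes the argument.

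Strictly speaking, the only substantive content of the proof of the proposition itself is matching notations: one must check that the $\psi^{-1}$ of Definition~1 (substituting $R^{-1}Q_i$ for each atomic predicate $Q_i$) is exactly the syntactic object whose semantics appears on the right-hand side of the inclusion in Lemma \ref{l-formula inversa}, and that the notion of reflection from Definition~2 literally reads ``$y \models \psi \Rightarrow x \models \psi^{-1}$ whenever $x R y$''. There is no real obstacle at this level: the difficulty was packed into Lemma \ref{l-formula inversa}, whose induction on the structure of $\psi$ relies on Lemma \ref{lem-dir} for the $\bigcirc$ case, on Lemma \ref{lem-invariant} plus the fixed-point characterisation $\Box P = \nu S.(P \wedge \bigcirc S)$ for the $\Box$ case, and on the fact that all remaining connectives (including $\U$, via its least-fixed-point definition) are monotone, which is exactly why the one-sided inclusion $R^{-1}\sema{\psi}{Y}\subseteq \sema{\psi^{-1}}{X}$ suffices and why the hypothesis excluding negation cannot be dropped.
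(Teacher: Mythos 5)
Your argument is correct: reflection of $\psi$ is indeed an immediate corollary of Lemma~\ref{l-formula inversa}, since $xRy$ and $y\in\sema{\psi}{Y}$ give $x\in R^{-1}\sema{\psi}{Y}$ by the definition of inverse image, and the lemma's inclusion then lands $x$ in $\sema{\psi^{-1}}{X}$, which is literally the statement $x\models\psi^{-1}$ required by the definition of reflection. The paper instead presents the proposition as proved ``by structural induction over the formula $\psi$'' using the first halves of Lemmas~\ref{lem-dir} and~\ref{l-formula inversa} together with Lemma~\ref{lem-invariant}; that is, it re-runs a case analysis on the top-level connective at the proposition level, invoking Lemma~\ref{lem-dir} for $\bigcirc$ and Lemma~\ref{lem-invariant} for $\Box$. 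Your reorganization shows that this top-level induction is redundant once Lemma~\ref{l-formula inversa} is taken as given, because that lemma already packages the entire induction (and is itself where Lemmas~\ref{lem-dir} and~\ref{lem-invariant} and the monotonicity of the negation-free connectives are actually consumed). The mathematical content is therefore the same; what your version buys is a cleaner separation of concerns, with the proposition reduced to two definitional unfoldings, and it correctly identifies why the one-sided inclusion suffices and why negation must be excluded.
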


\begin{proof}
The result is proved by structural induction over the formula
$\psi$ using the first half of Lemmas \ref{lem-dir} and \ref{l-formula
inversa}, and Lemma \ref{lem-invariant}. See \cite{FdFP07d} for further
details. \qed
\end{proof}

Preservation of properties is a consequence of the
reflection of properties together with the fact that if $R$ is a bisimulation
then $R^{-1}$ is also a bisimulation. We have thus proved the
following theorem.

\begin{theorem}\label{t-bis}
Let $\psi$ and $\varphi$ be formulas over sets $Y$ and $X$,
respectively, which do not use the \emph{negation} operator and
let $R$ be a bisimulation between coalgebras $c:X\lra FX$ and
$d:Y\lra FY$. Then $\psi$ is reflected by $R$ and $\varphi$ is
preserved by $R$.
\end{theorem}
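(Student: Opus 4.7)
The plan is to read the reflection half of the theorem directly off the preceding Proposition, and then deduce preservation from reflection by swapping the roles of $R$ and $R^{-1}$. The only genuine piece of work is to check that the converse of a bisimulation is again a bisimulation, since every other ingredient (the negation-free clauses, the lemmas on lifted predicates, and the inclusion $R\sema{\psi}{X}\subseteq\sema{\psi^{*}}{Y}$) is already established.

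For the first step, given a bisimulation $R\subseteq X\times Y$ between $c$ and $d$, I would unfold the definition of the lifting: if $(x,y)\in R$ then there exists $w\in F(R)$ with $Fr_1(w)=c(x)$ and $Fr_2(w)=d(y)$. Viewing the same underlying set as $R^{-1}\subseteq Y\times X$ simply interchanges the two projections, so this same $w$ (now regarded as a member of $F(R^{-1})$) witnesses $(d(y),c(x))\in\rel{F}(R^{-1})$. Hence $R^{-1}$ is a bisimulation between $d$ and $c$, and the Proposition applies to it verbatim.

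With that in hand, preservation is immediate. Suppose $aRb$ with $a\models\varphi$, where $\varphi$ is a negation-free formula over $X$. Then $bR^{-1}a$, and applying the Proposition to the bisimulation $R^{-1}$ and the formula $\varphi$ yields $b\models\varphi^{-1}$, where the inverse image is taken along $R^{-1}$. By Remark \ref{remark}, $\varphi^{-1}$ computed with respect to $R^{-1}$ is exactly $\varphi^{*}$ computed with respect to $R$, so $b\models\varphi^{*}$, which is the conclusion demanded by preservation. I do not expect any genuine obstacle: the whole proof is a one-line reduction once one observes that bisimulation is closed under converse and that the remark reconciles the two notations $\varphi^{*}$ and $\varphi^{-1}$ under this swap.
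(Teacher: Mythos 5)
Your proposal is correct and follows exactly the paper's own route: reflection is quoted from the preceding Proposition, and preservation is obtained by applying it to the converse bisimulation $R^{-1}$, with Remark \ref{remark} reconciling $\varphi^{*}$ and $\varphi^{-1}$. The only detail you add beyond the paper is the explicit check that $R^{-1}$ is a bisimulation (via the swap of projections in the relation lifting), which the paper asserts without proof; that check is sound.
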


\section{Reflection and preservation in simulations}

In \cite{Clarke99,Palomino-Thesis} it is proved not only that bisimulations
reflect and preserve properties but also that simulations reflect
them: it turns out that this result does not generalize straightforwardly to
the coalgebraic setting.

The main problem that we have found concerning this is that the
coalgebraic definition of simulation uses an arbitrary functorial
order $\<$, and in general reflection of properties will not hold for all
orders.

Let us show a counterexample that will convince us that
simulations may not reflect properties without restricting the
orders. Let us take
$F=\ps(id)$, $X=\{x_1,x_2\}$, $Y=\{y_1,y_2\}$ and the coalgebras
$c$ and $d$ defined as $c(x_1)=\{x_1,x_2\}$, $c(x_2)=\{x_2\}$,
$d(y_1)=y_2$ and $d(y_2)=y_2$. We define $u\< v$ whenever
$v\subseteq u$ and consider the formula $\varphi=\bigcirc P$,
where $P=\{y_2\}$, and the simulation $R=\{(x_1,y_2)\}$. It is
immediate to check that $R$ is a simulation and
$y_2\in\sem{\varphi}$, but $x_1\notin\sem{\varphi^{-1}}$.
\begin{itemize}
\item $y_2\in\sem{\varphi}$. Indeed, since $d(y_2)=y_2$ then
$y_2\in\sem{\varphi}=\bigcirc P$ is equivalent to $y_2\in
P=\{y_2\}$, which is trivially true.

\item $x_1\notin\sem{\varphi^{-1}}$. By definition,
$\varphi^{-1}=\bigcirc R^{-1}P=\bigcirc\{x_1\}$. Since
$c(x_1)=\{x_1,x_2\}$, it is enough to see that $x_2\notin\{x_1\}$,
which is also true.
\end{itemize}

As a consequence, we will need to restrict the functorial orders that
are involved in the definition of simulation. In a first approach we will impose
an extra requirement that the order must fulfill, and later we will not only
restrict the orders but also the functors that are involved.

\subsection{Restricting the orders}\label{sec-orders}

The idea is that we are going to require an extra property for
each pair of elements which are related by the order. In particular,
we are particularly interested in the following property (which is
defined in \cite{HughesJacobs04}):

\begin{definition}\label{down-closed}
Given a functor $F:\Sets\lra \Sets$, we say that an order $\<$
associated to it is ``down-closed'' whenever $a\< b$, with $a,b\in
FX$, implies that
\begin{displaymath}
b\in\pred{F}(P)\;\Longrightarrow\; a\in\pred{F}(P),\quad
\textrm{for all predicates $P\subseteq X$}\,.
\end{displaymath}
\end{definition}

We can show some examples of down-closed orders:

\begin{example}
\begin{enumerate}
\item Kripke structures are defined by the functor
$F=\ps(AP)\times\ps(id)$, so a down-closed order must fulfill that
if $(u,v)\< (u',v')$, then $(u',v')\in\pred{F}(P)$ implies
$(u,v)\in\pred{F}(P)$; that is, by definition of
$\pred{\ps(AP)\times\ps(id)}$, $u,u'\subseteq\ps(AP)$ and, if
$v'\in\pred{\ps(id)}(P)=\{U\mid \forall u\in U.\, u\in P\}$ then
$v\in\pred{\ps(id)}(P)$. In other words, for all $b\in v$ and $b'\in
v'$, if $b'\in P$ then $b\in P$. Therefore, what is needed
in this case is that the set of successors $v$ of the smaller pair is
contained in the set of successors $v'$ of the bigger pair, that
is, if $(u,v)\<(u',v')$ then $v\subseteq v'$.

\item Labelled transition systems are defined by the functor
$F=\ps(id)^{A}$, so the order must fulfill the following: if $u\<
v$ then $\forall a\in A.\; u(a)\subseteq u'(a)$.
\end{enumerate}
\end{example}

Those examples show that there are not many down-closed orders,
but it does not seem clear how to further extend this class in
such a way that we could still prove the reflection of properties
by simulations. Unfortunately, even under this restriction we can
only prove reflection (or preservation) of formulas that only use
the operators $\vee$, $\wedge$, $\bigcirc$ and $\Box$.

To convince us of this fact, we present a counterexample with
operator $\Diamond$. Let $X=\{x_1,x_2\}$, $Y=\{y_1,y_2\}$ and the functor
$F=\ps(id)$. We consider the following down-closed order: $u\< v$ if
$u\subseteq v$. We also define the coalgebras $c:X\lra
FX$ and $d:Y\lra FY$ as $c(x_1)=\{x_1\}$, $c(x_2)=\{x_2\}$,
$d(y_1)=\{y_1,y_2\}$ and $d(y_2)=\{y_2\}$. Obviously
$R=\{(x_1,y_1)\}$ is a simulation since $c(x_1)=\{x_1\}\< \{x_1\}$
and $\{y_1\}\< \{y_1,y_2\}=d(y_1)$ and, also,
$\{x_1\}\rel{F}(R)\{y_1\}$. We have $y_1\in\Diamond\{y_2\}$, since we
can reach $y_2$ from $y_1$, but $x_1\notin\Diamond
R^{-1}\{y_2\}=\Diamond\emptyset$. Indeed,
$x_1\notin\Diamond\emptyset$ is equivalent to
$x_1\in\Box\neg\emptyset$ and this is true since $\{x_1\}$ is an
invariant such that $x_1\in\{x_1\}$, with
$\{x_1\}\subseteq\neg\emptyset$.

In order to prove reflection of properties that only use the
operators $\vee$, $\wedge$, $\bigcirc$ and $\Box$, we will need a
previous elementary result involving binary relations.

\begin{proposition}\label{p-alzamientos1}
Let $R\subseteq X\times Y$ be a binary relation and $P\subseteq Y$ a
predicate. Let us suppose that $u\rel{F}(R)v$; then, if
$v\in\pred{F}(P)$ it is also true that $u\in\pred{F}(R^{-1}P)$.
\end{proposition}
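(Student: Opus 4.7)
The plan is to proceed by structural induction on the polynomial functor $F$, exploiting the standard inductive definitions of $\rel{F}$ and $\pred{F}$ on polynomial functors (which specialise to the clauses displayed in the preliminaries for $\ps(id)^A$ and $\ps(AP)\times\ps(id)$).

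The base cases will be immediate. For $F=\mathit{id}$ the hypothesis $u\rel{F}(R)v$ reduces to $uRv$ and $v\in\pred{id}(P)=P$ gives $u\in R^{-1}P=\pred{id}(R^{-1}P)$; for $F$ a constant $C$, $\pred{F}$ returns the whole of $C$, so the conclusion is automatic.

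For the inductive step, the constructors of product, coproduct, constant exponentiation and finite sequences are mechanical: one unfolds the componentwise or pointwise clauses of the relation and predicate liftings, applies the inductive hypothesis on each component or fibre, and repackages. The genuinely informative case is powerset. Writing the functor as $\ps F'$ with the inductive hypothesis available for $F'$, assume $U\rel{\ps F'}(R)V$ and $V\in\pred{\ps F'}(P)$. Given any $u\in U$, the ``forth'' half of the symmetric definition of $\rel{\ps F'}(R)$ supplies some $v\in V$ with $u\rel{F'}(R)v$, while $V\in\pred{\ps F'}(P)$ yields $v\in\pred{F'}(P)$, so the inductive hypothesis forces $u\in\pred{F'}(R^{-1}P)$. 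Since $u$ was arbitrary, $U\in\pred{\ps F'}(R^{-1}P)$, as required.

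The main subtlety I anticipate is precisely the powerset case, and more specifically the fact that only the ``forth'' clause of the symmetric condition defining $\rel{\ps}$ is used. That asymmetry accounts for the one-sided form of the statement (with $R^{-1}P$ appearing on the $X$-side); the dual implication ``$u\rel{F}(R)v$ and $u\in\pred{F}(Q)$ imply $v\in\pred{F}(RQ)$'' would use the ``back'' clause instead and could be proved in parallel, which is exactly the symmetry already exploited implicitly in Lemma~\ref{lem-dir}. Apart from this, the induction amounts to a routine checking exercise once the right skeleton is in place. \qed
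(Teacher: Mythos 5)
Your proposal is correct and follows the same route as the paper, namely structural induction on the polynomial functor $F$ (the paper itself only sketches this and defers the details to the extended version). The base cases and the powerset case — where only the ``forth'' half of the symmetric lifting is needed — are handled correctly, and your remark about why this asymmetry matches the one-sided form of the statement is accurate.
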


\begin{proof}
Once again the proof will proceed by structural induction on the
functor $F$. See \cite{FdFP07d} for further details. \qed
\end{proof}

We will also need a subtle adaptation of Lemmas
\ref{lem-invariant} and \ref{l-formula inversa} from the framework
of bisimulations to the framework of simulations. In particular,
we can adapt Lemma \ref{lem-invariant} to prove that if $Q$ is an
invariant and $R$ a simulation, $R^{-1}Q$ is still an invariant,
whereas the first half of Lemma \ref{l-formula inversa} will also be true in the
framework of simulations for formulas that only use the operators
$\vee$, $\wedge$, $\bigcirc$ and $\Box$.

\begin{lemma}\label{l-invariantes y sim}
Let $R$ be a simulation between coalgebras $c:X\lra FX$ and
$d:Y\lra FY$, with a down-closed order, and let $Q\subseteq Y$
be an invariant. Then $R^{-1}Q$ is also an invariant.
\end{lemma}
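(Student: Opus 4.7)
The plan is to unfold the relevant definitions in the natural order and then apply down-closedness twice, once on each end of the ``sandwich'' that the simulation provides.

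First I would take $x\in R^{-1}Q$ and pick a witness $y\in Q$ with $xRy$. Since $Q$ is an invariant for $d$, we obtain $d(y)\in\pred{F}(Q)$. The goal is to push this information back along $R$ to $c(x)$, ending up with $c(x)\in\pred{F}(R^{-1}Q)$.

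Next, because $R$ is a simulation with respect to the down-closed order $\sqsubseteq$, the pair $(c(x),d(y))$ lies in $\rel{F}_{\sqsubseteq}(R)$; unfolding the definition in the excerpt, there exists $w\in F(R)$ with
\[
 c(x)\sqsubseteq Fr_1(w)\quad\text{and}\quad Fr_2(w)\sqsubseteq d(y).
\]
Now I would chain three observations: (i) applying down-closedness to $Fr_2(w)\sqsubseteq d(y)$ and $d(y)\in\pred{F}(Q)$ gives $Fr_2(w)\in\pred{F}(Q)$; (ii) by the very construction of $w\in F(R)$, the pair $(Fr_1(w),Fr_2(w))$ belongs to $\rel{F}(R)$, so Proposition \ref{p-alzamientos1} with predicate $Q$ upgrades (i) to $Fr_1(w)\in\pred{F}(R^{-1}Q)$; (iii) a second use of down-closedness on $c(x)\sqsubseteq Fr_1(w)$ then yields $c(x)\in\pred{F}(R^{-1}Q)$, which is exactly what the invariant condition for $R^{-1}Q$ requires.

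The mildly delicate step is (ii): we need to observe that the very $w$ produced by the simulation condition is not only an element of $F(R)$ but also a certificate that $(Fr_1(w),Fr_2(w))\in\rel{F}(R)$, so that Proposition \ref{p-alzamientos1} is genuinely applicable. Everything else is a routine unfolding; in particular, note that asymmetry is essential here, which is why the statement only asserts that $R^{-1}Q$, and not $RP$, is an invariant, mirroring the asymmetric role that $\sqsubseteq$ plays in the simulation condition.
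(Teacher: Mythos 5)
Your proof is correct and follows essentially the same route as the paper's: extract $y\in Q$ with $xRy$, use invariance of $Q$, unfold the simulation condition into the sandwich $c(x)\sqsubseteq u \;\rel{F}(R)\; v\sqsubseteq d(y)$, and apply down-closedness on both ends around an application of Proposition~\ref{p-alzamientos1}. Your step (ii) merely spells out explicitly what the paper's notation $u\,\rel{F}(R)\,v$ already asserts, so there is no difference in substance.
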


\begin{proof}
We are going to show that for all $x\in R^{-1}Q$ we have
$c(x)\in\pred{F}(R^{-1}Q)$. Let us take an arbitrary $x\in
R^{-1}Q$; then, by definition there exists $y\in Q$ such that
$xRy$ and, since $Q$ is an invariant, $d(y)\in\pred{F}(Q)$. On the
other hand, since $R$ is a simulation, $c(x)\< u\rel{F}(R)v\<
d(y)$. Henceforth, since we are working with a down-closed order
and $d(y)\in\pred{F}(Q)$, then $v\in\pred{F}(Q)$. Also, by
Proposition \ref{p-alzamientos1} we have $u\in\pred{F}(R^{-1}Q)$
and, using again that the order is down-closed, it follows that
$c(x)\in\pred{F}(R^{-1}Q)$. \qed
\end{proof}

\begin{lemma}[\cite{FdFP07d}]\label{l-formula inversa sim}
Let $R$ be a simulation between coalgebras $c:X\lra FX$ and
$d:Y\lra FY$, with a down-closed order. If $\varphi$ is a
temporal formula constructed only with operators $\vee$, $\wedge$,
$\bigcirc$ and $\Box$, then
\[R^{-1}\sema{\varphi}{Y}\subseteq\sema{\varphi^{-1}}{X}\,.\]
\end{lemma}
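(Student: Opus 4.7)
I would prove the lemma by structural induction on $\varphi$, following the shape of the allowed operators.

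For the base case $\varphi=P\subseteq Y$, we have $\varphi^{-1}=R^{-1}P$ by definition, so $R^{-1}\sema{P}{Y}=R^{-1}P=\sema{R^{-1}P}{X}$ and the inclusion is in fact an equality. The boolean cases $\varphi=\varphi_1\vee\varphi_2$ and $\varphi=\varphi_1\wedge\varphi_2$ are straightforward: the direct image $R^{-1}(\cdot)$ distributes over unions and commutes with intersections up to the inclusion $R^{-1}(A\cap B)\subseteq R^{-1}A\cap R^{-1}B$, so applying the inductive hypothesis componentwise suffices.

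The real content lies in the $\bigcirc$ and $\Box$ cases. For $\varphi=\bigcirc\psi$, suppose $x\in R^{-1}\sema{\bigcirc\psi}{Y}$; pick $y$ with $xRy$ and $d(y)\in\pred{F}(\sema{\psi}{Y})$. Because $R$ is a simulation for the down-closed order, there exist $u,v$ with $c(x)\< u$, $u\,\rel{F}(R)\,v$ and $v\< d(y)$. Down-closedness applied to $v\< d(y)$ yields $v\in\pred{F}(\sema{\psi}{Y})$; Proposition \ref{p-alzamientos1} then delivers $u\in\pred{F}(R^{-1}\sema{\psi}{Y})$; and a second use of down-closedness on $c(x)\< u$ gives $c(x)\in\pred{F}(R^{-1}\sema{\psi}{Y})$. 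The inductive hypothesis $R^{-1}\sema{\psi}{Y}\subseteq\sema{\psi^{-1}}{X}$ combined with monotonicity of $\pred{F}$ (provable by induction on the polynomial structure of $F$) then yields $c(x)\in\pred{F}(\sema{\psi^{-1}}{X})$, i.e.\ $x\in\sema{\bigcirc\psi^{-1}}{X}=\sema{(\bigcirc\psi)^{-1}}{X}$.

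For $\varphi=\Box\psi$, take $x\in R^{-1}\sema{\Box\psi}{Y}$ witnessed by some $y$ with $xRy$. By definition of $\Box$ there is an invariant $Q\subseteq\sema{\psi}{Y}$ of $d$ with $y\in Q$. Lemma \ref{l-invariantes y sim} guarantees that $R^{-1}Q$ is an invariant of $c$, and clearly $x\in R^{-1}Q$. Moreover, $R^{-1}Q\subseteq R^{-1}\sema{\psi}{Y}\subseteq\sema{\psi^{-1}}{X}$ by monotonicity of $R^{-1}(\cdot)$ and the inductive hypothesis. Thus $R^{-1}Q$ is an invariant contained in $\sema{\psi^{-1}}{X}$ containing $x$, so $x\in\sema{\Box\psi^{-1}}{X}=\sema{(\Box\psi)^{-1}}{X}$.

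The main obstacle is the $\bigcirc$ case, where one must genuinely exploit both halves of the down-closed hypothesis in order to transport $\pred{F}$-membership across the zig-zag $c(x)\< u\,\rel{F}(R)\,v\< d(y)$ that a simulation provides; this is precisely what fails for operators like $\Diamond$ or $\U$, as shown by the counterexample preceding the lemma, and is why the statement is restricted to $\{\vee,\wedge,\bigcirc,\Box\}$.
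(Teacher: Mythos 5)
Your proof is correct and follows exactly the route the paper intends: the paper defers the details to the extended version, but it introduces Proposition~\ref{p-alzamientos1} and Lemma~\ref{l-invariantes y sim} precisely so that the $\bigcirc$ case can be handled by the zig-zag $c(x)\< u\,\rel{F}(R)\,v\< d(y)$ with two applications of down-closedness, and the $\Box$ case by pulling back invariants along $R^{-1}$. Your treatment of the base and boolean cases and your use of monotonicity of $\pred{F}$ to discharge the induction hypothesis are also as expected, so there is nothing to add.
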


Now we can state the corresponding theorem:

\begin{theorem}[\cite{FdFP07d}]\label{t-sim1}
Let $R$ be a simulation between coalgebras $c:X\lra FX$ and
$d:Y\lra FY$ with a down-closed order. If $\varphi$ is a
temporal formula constructed only with operators $\vee$, $\wedge$,
$\bigcirc$ and $\Box$, then the property $\varphi$ is reflected by
the simulation.
\end{theorem}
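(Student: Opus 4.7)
The plan is to prove the theorem by structural induction on $\varphi$, effectively repackaging Lemma \ref{l-formula inversa sim}. Reflection unfolds to: if $xRy$ and $y\in\sema{\varphi}{Y}$, then $x\in\sema{\varphi^{-1}}{X}$. Equivalently, the theorem is just the set-theoretic statement $R^{-1}\sema{\varphi}{Y}\subseteq\sema{\varphi^{-1}}{X}$ specialized point-wise; so once the inclusion is established by induction, the conclusion is immediate. If one wishes to cite Lemma \ref{l-formula inversa sim} as a black box, the theorem is a one-line corollary; otherwise, I would carry out the induction myself as described below.

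The base case $\varphi=P$ is trivial since $\varphi^{-1}$ is defined as $R^{-1}P$. For $\varphi_{1}\vee\varphi_{2}$ and $\varphi_{1}\wedge\varphi_{2}$, I would use the standard identities $R^{-1}(A\cup B)=R^{-1}A\cup R^{-1}B$ and $R^{-1}(A\cap B)\subseteq R^{-1}A\cap R^{-1}B$, combined with the induction hypothesis and the monotonicity of $\vee,\wedge$. The substantive cases are $\bigcirc$ and $\Box$, both of which require the down-closedness of the order.

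For $\varphi=\bigcirc\psi$, take $x\in R^{-1}\sema{\bigcirc\psi}{Y}$, so there exists $y$ with $xRy$ and $d(y)\in\pred{F}(\sema{\psi}{Y})$. The simulation condition gives intermediate $u,v$ with $c(x)\<u$, $u\,\rel{F}(R)\,v$, $v\<d(y)$. Down-closedness applied to $v\<d(y)$ yields $v\in\pred{F}(\sema{\psi}{Y})$; Proposition \ref{p-alzamientos1} then produces $u\in\pred{F}(R^{-1}\sema{\psi}{Y})$; a second use of down-closedness on $c(x)\<u$ gives $c(x)\in\pred{F}(R^{-1}\sema{\psi}{Y})$. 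Invoking the induction hypothesis $R^{-1}\sema{\psi}{Y}\subseteq\sema{\psi^{-1}}{X}$ together with the monotonicity of $\pred{F}$, we obtain $c(x)\in\pred{F}(\sema{\psi^{-1}}{X})$, i.e., $x\in\sema{\bigcirc\psi^{-1}}{X}=\sema{(\bigcirc\psi)^{-1}}{X}$. For $\varphi=\Box\psi$, take $x\in R^{-1}\sema{\Box\psi}{Y}$ witnessed by some $y$ with an invariant $Q\subseteq\sema{\psi}{Y}$ and $y\in Q$. By Lemma \ref{l-invariantes y sim}, $R^{-1}Q$ is an invariant; it contains $x$; and by the induction hypothesis it lies inside $R^{-1}\sema{\psi}{Y}\subseteq\sema{\psi^{-1}}{X}$. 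Therefore $x\in\sema{\Box\psi^{-1}}{X}=\sema{(\Box\psi)^{-1}}{X}$.

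The main obstacle is really concentrated in the $\bigcirc$ step, where down-closedness must be invoked on both sides of the $\rel{F}(R)$-witness supplied by the simulation. This is exactly the manoeuvre that breaks for $\Diamond$ and $\U$: down-closedness travels from a larger $\pred{F}$-satisfier to a smaller one, which matches the direction of $\Box$-style invariant arguments but not the existential direction used by $\Diamond$, consistent with the counterexample given just before the statement.
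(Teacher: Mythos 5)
Your proposal is correct and follows essentially the same route as the paper: the theorem is exactly the pointwise reading of Lemma~\ref{l-formula inversa sim}, and your structural induction (using down-closedness twice around the $\rel{F}(R)$-witness together with Proposition~\ref{p-alzamientos1} for $\bigcirc$, and Lemma~\ref{l-invariantes y sim} for $\Box$) reconstructs the argument the paper defers to \cite{FdFP07d}. Your closing observation about why the manoeuvre fails for $\Diamond$ and $\U$ also matches the paper's discussion.
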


Instead of considering down-closed orders, we
could have imposed the converse implication, that is, those
orders that satisfy that if $a\in\pred{F}(P)$ then
$b\in\pred{F}(P)$.

\begin{definition}
Given a functor $F:\Sets\lra \Sets$ we say that an order
$\<$ is \emph{up-closed} if whenever $a\< b$ then
\begin{displaymath}
a\in\pred{F}(P)\;\Longrightarrow\; b\in\pred{F}(P),\quad
\textrm{for all predicates $P$}\,.
\end{displaymath}
\end{definition}

Obviously up-closed is symmetrical to down-closed, that is, it is
equivalent to taking $\<^{op}$ instead of $\<$ in Definition
\ref{down-closed}. So, for example, in the case of Kripke
structures an up-closed order would satisfy
$(u,v)\<(u',v')$ if $v'\subseteq v$.

The interesting thing about up-closed orders is that they allow us
to prove \emph{preservation} of properties; again, this result
will hold only for formulas constructed with the operators $\vee$,
$\wedge$, $\bigcirc$ and $\Box$. We need the following auxiliary
result whose proof is analogous to the case of down-closed
orders.
%%%
Since if $R$ is a simulation for the order $\<$, then $R^{-1}$ is a
simulation for the oposite order $\<^{op}$, we can apply Theorem
\ref{t-sim1} to get the following (see \cite{FdFP07d} for more details):

\begin{theorem}
Let $R$ be a simulation between coalgebras $c:X\lra FX$ and
$d:Y\lra FY$ carrying an up-closed order. If $\varphi$ is
a temporal formula constructed only with the operators $\vee$,
$\wedge$, $\bigcirc$ and $\Box$, then $R$ preserves the property
$\varphi$.
\end{theorem}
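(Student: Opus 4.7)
The plan is to reduce this preservation statement to the reflection statement of Theorem~\ref{t-sim1} via a symmetry argument, which the paragraph preceding the theorem essentially invites us to carry out. Two dualities are in play: between $R$ and $R^{-1}$ on the one hand (a simulation for $\<$ becomes a simulation for the opposite order), and between up-closed and down-closed orders on the other (one is obtained from the other by taking $\<^{op}$). Matching these up with the duality between preservation and reflection given by Remark~\ref{remark} produces the result.

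First I would verify the order-theoretic symmetry: if $\<$ is up-closed in the sense of the new definition, then $\<^{op}$ satisfies the condition of Definition~\ref{down-closed}. This is immediate: the hypothesis $a \< b$ with the implication $a \in \pred{F}(P) \Rightarrow b \in \pred{F}(P)$ becomes, after renaming, exactly $b \<^{op} a$ with $b \in \pred{F}(P) \Rightarrow a \in \pred{F}(P)$, i.e.\ $\<^{op}$ is down-closed. Next I would check the simulation symmetry: if $R$ is a simulation for $\<$ between $c$ and $d$, then $R^{-1}$ is a simulation for $\<^{op}$ between $d$ and $c$. This follows from the definition of $\rel{F}_{\<}$: if $(x,y) \in R$ and $c(x) \< Fr_1(w) \wedge Fr_2(w) \< d(y)$ for some $w \in F(R)$, then reading the same $w$ through the swapped projections shows $d(y) \<^{op} Fr_2(w) \wedge Fr_1(w) \<^{op} c(x)$ as witnesses for $(d(y), c(x)) \in \rel{F}_{\<^{op}}(R^{-1})$.

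Now I would translate the preservation statement into a reflection statement for $R^{-1}$. Given $x R y$ with $x \models \varphi$, preservation asks for $y \models \varphi^{*}$, where $\varphi^{*}$ is built by replacing each basic predicate $P_i$ in $\varphi$ by $R P_i$. Reading the pair as $y R^{-1} x$ and invoking Remark~\ref{remark}, $\varphi^{*}$ computed with respect to $R$ is precisely $\varphi^{-1}$ computed with respect to $R^{-1}$, so this is the assertion that $R^{-1}$ reflects $\varphi$.

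At this point Theorem~\ref{t-sim1} applies directly to the simulation $R^{-1}$ carrying the down-closed order $\<^{op}$, since the class of formulas built from $\vee$, $\wedge$, $\bigcirc$, $\Box$ is unchanged by the duality: we conclude that $R^{-1}$ reflects $\varphi$, i.e.\ that $R$ preserves $\varphi$, which is what we wanted. The only step that might require care is the simulation-symmetry check, since it is easy to confuse the direction of the two orders appearing around $\rel{F}(R)$; but once the witness $w \in F(R)$ is seen to transport verbatim to $F(R^{-1})$, everything else is bookkeeping with Theorem~\ref{t-sim1}.
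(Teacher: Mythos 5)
Your proposal is correct and follows essentially the same route as the paper, which likewise reduces preservation for an up-closed order to Theorem~\ref{t-sim1} by observing that $R^{-1}$ is a simulation for the opposite (hence down-closed) order $\<^{op}$ and that, by Remark~\ref{remark}, reflection by $R^{-1}$ is preservation by $R$. Your explicit verification of the two symmetries (order and simulation) just fills in details the paper leaves implicit.
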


\subsection{Restricting the class of functors}\label{sec-functors}

As we have just seen, it is not enough to restrict ourselves to
down-closed (or up-closed) orders to get a valid result for all
properties. What we want is a necessary and sufficient condition
over functorial orders that implies reflection (or preservation)
of properties by simulations. So far we have not found such a
condition, but we have a sufficient one for simulations to reflect
properties (and, in fact, also so that they preserve properties).

Recalling the structure of lemmas and propositions used to prove
reflection and preservation of properties by bisimulations, we
notice that the key ingredient was Lemma \ref{lem-dir}. With this lemma we
were able to prove directly preservation of invariants (Lemma
\ref{lem-invariant}) and the relation between $R^{-1}$ (respectively $R$) of a
formula and the inverse of a formula (respectively direct image of a formula).
Also, Lemma \ref{lem-dir} was essential to prove directly reflection and
preservation of formulas built with the \emph{nexttime} operator and the rest of
temporal operators.

In the previous section the problem we faced was that either the second half of 
Lemma \ref{lem-dir} (for down-closed orders) or the first half of Lemma
\ref{lem-dir} (for up-closed orders) held, but not both
simultaneously. As a consequence, the results for the operators
\emph{eventually} and \emph{until} did not hold. So, if we were
capable of finding a subclass of functors and orders such that they
fulfill results analogous to Lemma \ref{lem-dir} then, translating those proofs,
we would get reflection and preservation of arbitrary properties.

We are going to define a subclass of functors and orders in the
way that Hughes and Jacobs did in \cite{HughesJacobs04} for the
subclass \textbf{Poly}.

\begin{definition}\label{d-ordenes}
The class \bfup{Order} is the least class of functors closed under
the following operations:
\begin{enumerate}
\item For every preorder $(A,\leq)$, the constant functor
$X\mapsto A$ with the order given by $\<_{X}=\leq_{A}$.

\item The identity functor with equality order.

\item Given two polynomial functors $F_1$ and $F_2$ with orders
$\<^1$ and $\<^2$, the product functor
$F_1\times F_2$ with order $\<_{X}$ given by
\begin{displaymath}
(u,v)\<_{X}(u',v')\quad \textrm{if}\quad u\<^1
u'\quad \textrm{and}\quad v\<^2 v'\, .
\end{displaymath}

\item Given the polynomial functor $F$ with order $\<^F$
and the set $A$, the functor $F^A$ with order
$\<_{X}$ given by
\begin{displaymath}
u\<_{X}v\; \textrm{if}\;\; u(a)\<^F v(a)\; \textrm{for all $a\in
A$.}
\end{displaymath}

\item Given two polynomial functors $F_1$ and $F_2$ with orders
$\<^1$ and $\<^2$, the coproduct functor $F_1 + F_2$ with
order $\<_{X}$ given by
\begin{displaymath}
\begin{array}{lcl}
u\<_{X}v & \textrm{if} & u=\kappa_1(u_0)\;
\textrm{and}\; v=\kappa_1(v_0)\;\textrm{with}\; u_0\<^1 v_0\\
& \textrm{or} & u=\kappa_2(u_0)\; \textrm{and}\;
v=\kappa_2(v_0)\;\textrm{with}\; u_0\<^2 v_0\; .
\end{array}
\end{displaymath}

\item Given the polynomial functor $F$ with order $\<^F$, the powerset
functor $\ps(F)$ with order
$\<_{X}$ given by
\begin{displaymath}
\begin{array}{lcl}
u\<_{X}v & \textrm{if} & \forall a\in u\; \exists b\in
v\quad\textrm{such that}\quad a\<^F b\\
& \textrm{and also} & \forall b\in v\; \exists a\in
u\quad\textrm{such that}\quad a\<^F b\, .
\end{array}
\end{displaymath}
\end{enumerate}
\end{definition}

For example the usual order for Kripke structures is not in the
class \textbf{Order}. Besides, in the definition of \textbf{Poly}
in \cite{HughesJacobs04} the authors did not consider the powerset
functor but we do, although we are not using the \emph{usual}
order for this functor.

At first, to obtain that simulations not only reflect but also
preserve properties may seem a little surprising. If we think
about the elements in the subclass \bfup{Order} we notice that we
have restricted the orders to equality-like orders, that is,
almost all possible orders in \bfup{Order} are the equality.
However, since the class \textbf{Order} is very similar to the
class \textbf{Poly}, it has the same good properties shown in
\cite{HughesJacobs04} (like the stablility of the orders and
functors). 

\begin{example}
\begin{enumerate}
\item If we consider the functor $\ps(id)$, then the order $\<$
defined in Definition \ref{d-ordenes} says that $u\< v$ if and
only if for each $a\in u$ there exists $b\in v$ such that $a=b$,
and if for each $b\in v$ there exists $a\in u$ such that $a=b$.
This means that $\<$ is the identity relation. As an immediate
consequence for transition systems the only possible
\textbf{Order} simulations are bisimulations.

\item If we consider the functor $A\times id$ where $A$ has a
preorder $\leq_A$ different from the identity, the order $\<$
from Definition \ref{d-ordenes} is the following:
$(u,v)\<(u',v')$ iff $v=v'$ and $u\leq_{A} u'$. So, if $\leq_A$ is
not the identity, neither is $\<$. For example, let us take
$X=\{x_1,x_2,x_3\}$, $Y=\{y_1,y_2\}$, $AP=\{p_1,p_2,p_3\}$ and
consider the functor $F=\ps(id)\times\ps(AP)$ and the coalgebras
$c:X\lra FX$ and $d:Y\lra FY$ defined by
$c(x_1)=(\{x_2,x_3\},\{p_1\})$, $c(x_2)=(\{x_3\},\{p_2\})$,
$c(x_3)=(\{x_2\},\{p_3\})$, $d(y_1)=(\{y_2\},\{p_2\})$ and
$d(y_2)=(\{y_2\},\{p_1\})$. Obviously there is no bisimulation
between $x_1$ and $y_1$ since this atomic propositions are not
the same, but taking the order $\<$ defined as
$(u,v)\<(u',v')$ iff $u=u'$ (that is, taking as the preorder $\leq_{AP}$ the
total relation) we have that there exists a simulation $R$ in \textbf{Order}
between $x_1$ and $y_1$. 
\end{enumerate}
\end{example}

\begin{lemma}[\cite{FdFP07d}]\label{lem-inv simulaciones}
Let $R\subseteq X\times Y$ be a simulation between coalgebras
$c:X\lra FX$ and $d:Y\lra FY$, such that the functor $F$ is in the
class \bfup{Order}. Let us also suppose that $P\subseteq Y$ and $x
R y$; then, if $d(y)\in\pred{F}(P)$ we have
$c(x)\in\pred{F}(R^{-1}P)$.
\end{lemma}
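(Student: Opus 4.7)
The plan is to prove, by structural induction on $F\in\bfup{Order}$, the following strengthening: for every $R\subseteq X\times Y$, every $P\subseteq Y$, and every chain $\alpha\< u$, $(u,v)\in\rel{F}(R)$, $v\< \beta$ with $\beta\in\pred{F}(P)$, we have $\alpha\in\pred{F}(R^{-1}P)$. The lemma follows immediately, since $R$ being a simulation with $xRy$ produces, by the very definition of $\rel{F}_{\<}$, exactly such a chain $c(x)\< u\,\rel{F}(R)\, v\< d(y)$.

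The base and the easier inductive cases are routine. For a constant functor $FX=A$ the lifted predicate is all of $A$, so nothing is to prove. For the identity functor with equality order, the chain collapses to $\alpha R\beta$ with $\beta\in P$, giving $\alpha\in R^{-1}P$. For the product $F_1\times F_2$ the order, the relation lifting and the predicate lifting all factor componentwise, so two applications of the inductive hypothesis suffice. For the exponent $F^{A}$ everything is pointwise in $a\in A$. For the coproduct $F_1+F_2$, a case split on the injection $\kappa_i$, which both the order and $\rel{F_1+F_2}(R)$ respect, reduces to the inductive hypothesis for $F_i$.

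The genuinely interesting case is the powerset $\ps(F)$, where $\pred{\ps(F)}(P)=\{U\subseteq FX\mid \forall a\in U.\, a\in\pred{F}(P)\}$, while both the order of Definition~\ref{d-ordenes}.6 and the relation lifting $\rel{\ps(F)}(R)$ are Egli--Milner-like, carrying forward and backward matching clauses. To show $\alpha\in\pred{\ps(F)}(R^{-1}P)$ I take an arbitrary $a\in\alpha$ and chase it rightward through the chain: the forward half of $\alpha\< u$ yields $a'\in u$ with $a\<^{F} a'$; the forward half of $u\,\rel{\ps(F)}(R)\, v$ yields $a''\in v$ with $(a',a'')\in\rel{F}(R)$; the forward half of $v\< \beta$ yields $a'''\in \beta$ with $a''\<^{F} a'''$. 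Since $\beta\in\pred{\ps(F)}(P)$ means $\beta\subseteq\pred{F}(P)$, we obtain $a'''\in\pred{F}(P)$, and the inductive hypothesis applied to $a\<^{F} a'\,\rel{F}(R)\, a''\<^{F} a'''$ delivers $a\in\pred{F}(R^{-1}P)$, as required.

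The main obstacle is precisely this powerset step, which also explains the shape of \bfup{Order}. A purely down-closed (Hoare-style) or purely up-closed powerset order would strip information from one of the three links in the chain and leave no way to connect an arbitrary $a\in\alpha$ to a $\pred{F}(P)$-witness inside $\beta$ through $\rel{F}(R)$. The two-sided Egli--Milner matching built into Definition~\ref{d-ordenes} restores exactly the symmetric discipline that Lemma~\ref{lem-dir} enjoyed for free in the bisimulation setting, and is what makes reflection (and, dually, preservation) of all the temporal operators go through.
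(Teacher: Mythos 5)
Your proof is correct, and it follows the route the paper indicates for this family of results (the paper itself defers the proof of this lemma to the extended version \cite{FdFP07d}, but all the analogous statements, e.g.\ Proposition~\ref{p-alzamientos1}, are proved by structural induction on the functor). The one substantive choice you make --- strengthening the induction hypothesis to arbitrary chains $\alpha\< u\,\rel{F}(R)\,v\<\beta$ so that the powerset and exponent cases can recurse on elements of $FX$ rather than on coalgebra states --- is exactly what is needed, and your treatment of the Egli--Milner clauses in the $\ps(F)$ case is sound.
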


In a similar way we have the corresponding lemma involving direct
predicates.

\begin{lemma}\label{lem-dir simulaciones}
Let $R\subseteq X\times Y$ be a simulation between coalgebras
$c:X\lra FX$ and $d:Y\lra FY$, such that the functor $F$ is in
\bfup{Order}. Let us suppose also that $P\subseteq X$ and $x R y$.
Then, if $c(x)\in\pred{F}(P)$, $d(y)\in\pred{F}(R P)$.
\end{lemma}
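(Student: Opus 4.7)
The plan is to reduce the statement to Lemma \ref{lem-inv simulaciones} by passing to the inverse relation $R^{-1}$ and the opposite order $\<^{op}$. The key observation is that if $R \subseteq X \times Y$ is a simulation between $c$ and $d$ with respect to a functorial order $\<$, then $R^{-1} \subseteq Y \times X$ is a simulation between $d$ and $c$ with respect to $\<^{op}$: given $yR^{-1}x$, simulation of $R$ yields $w \in F(R)$ with $c(x) \< Fr_1(w)$ and $Fr_2(w) \< d(y)$, and swapping the roles of the projection morphisms $r_1, r_2$ produces a witness $w' \in F(R^{-1})$ showing that $(d(y), c(x)) \in \rel{F}_{\<^{op}}(R^{-1})$.

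Next I would verify, by induction on the structure of $F$, that $\<^{op}$ is again an order belonging to \bfup{Order}. All cases are straightforward: for a constant functor, the opposite of a preorder is a preorder; the equality order on the identity is self-dual; for products, exponentials, and coproducts the opposite order is the componentwise or pointwise opposite, which falls back on the inductive hypothesis; and the powerset clause of Definition \ref{d-ordenes} is already two-sided, so taking $(\<^F)^{op}$ inside the clause gives the opposite of $\<_X$. With this closure property in hand, a direct application of Lemma \ref{lem-inv simulaciones}, this time with $R^{-1}$ as the simulation, $c$ playing the role of the codomain coalgebra, and $P \subseteq X$ as the predicate, transforms the hypothesis $c(x) \in \pred{F}(P)$ into $d(y) \in \pred{F}((R^{-1})^{-1}P) = \pred{F}(RP)$, which is exactly the conclusion we want.

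The main obstacle is the closure of \bfup{Order} under taking opposites, and in particular the powerset case, where one must check that the symmetric two-sided matching condition is genuinely self-dual modulo replacing $\<^F$ by $(\<^F)^{op}$. Alternatively, one could mirror the structural-induction proof of Lemma \ref{lem-inv simulaciones} directly, combining (i) the up-closedness of every $\<$ in \bfup{Order} (needed to move $\pred{F}(P)$ along $c(x) \< Fr_1(w)$), (ii) a direct-image analogue of Proposition \ref{p-alzamientos1} showing that $u\,\rel{F}(R)\,v$ and $u \in \pred{F}(P)$ imply $v \in \pred{F}(RP)$, and (iii) up-closedness again to transfer $\pred{F}(RP)$ along $Fr_2(w) \< d(y)$; but the reduction via $R^{-1}$ is shorter and exposes the symmetry of the situation.
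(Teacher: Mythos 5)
Your proof is correct, but it takes a different route from the paper's. The paper offers no argument in the body beyond the remark that the lemma follows ``in a similar way'' to Lemma~\ref{lem-inv simulaciones}, i.e.\ the intended proof is a second structural induction on $F$, symmetric to the one for inverse images (this is essentially your alternative route: up-closedness of every order in \bfup{Order} to move along $c(x)\<Fr_1(w)$ and $Fr_2(w)\<d(y)$, plus a direct-image analogue of Proposition~\ref{p-alzamientos1}). You instead derive the lemma formally from Lemma~\ref{lem-inv simulaciones} by dualizing: $R^{-1}$ is a simulation for $\<^{op}$, $(R^{-1})^{-1}P=RP$, and \bfup{Order} is closed under taking opposites. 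All three ingredients check out --- the paper itself uses the first one in Section~\ref{sec-orders} to pass from down-closed to up-closed orders, and your verification of the powerset case is right: the two-sided Egli--Milner-style clause of Definition~\ref{d-ordenes} built over $(\<^F)^{op}$ is literally the opposite of the one built over $\<^F$, and the constant, identity, product, exponential and coproduct clauses dualize componentwise. What your reduction buys is that the structural induction is done only once (hidden in Lemma~\ref{lem-inv simulaciones}) at the price of a separate, but easy, induction showing closure of \bfup{Order} under opposites; what the paper's mirrored induction buys is concreteness and no need to reason about $\<^{op}$ as a new functorial order. One small point worth making explicit if you write this up: the opposite of a functorial order is again functorial (monotone maps for $\<$ are monotone for $\<^{op}$), so $\<^{op}$ really is a legitimate order in the sense of the commuting triangle through $\bfup{PreOrd}$.
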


Now we can conclude that under these hypothesis simulations
reflect and preserve properties, simultaneously! This fact is a
straightforward result from Lemmas \ref{lem-inv simulaciones} and
\ref{lem-dir simulaciones}.

\begin{theorem}
Let $R$ be a simulation between coalgebras $c:X\lra FX$ and
$d:Y\lra FY$, with $F$ a polynomial functor in the class
\bfup{Order}. Then, the simulation $R$ reflects and preserves
properties.
\end{theorem}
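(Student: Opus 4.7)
The plan is to replay the argument of Theorem \ref{t-bis} essentially word for word. The reason this is feasible is that Lemmas \ref{lem-inv simulaciones} and \ref{lem-dir simulaciones} jointly supply both directions of the transfer property that Lemma \ref{lem-dir} provided for bisimulations: if $xRy$, then from $c(x)\in\pred{F}(P)$ one concludes $d(y)\in\pred{F}(RP)$, and from $d(y)\in\pred{F}(P)$ one concludes $c(x)\in\pred{F}(R^{-1}P)$. Every result in Section~3 was ultimately built on these two implications, so once they are available for simulations in \bfup{Order} the whole scaffolding transports without modification.

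Concretely, the first step is to lift Lemma \ref{lem-invariant} to this setting: for a simulation $R$ whose functor lies in \bfup{Order}, both the direct image $RP$ of an invariant $P\subseteq X$ for $c$ and the inverse image $R^{-1}Q$ of an invariant $Q\subseteq Y$ for $d$ are again invariants. The proof is the same chase-diagram used for bisimulations, only now appealing to Lemma \ref{lem-dir simulaciones} for the direct image and Lemma \ref{lem-inv simulaciones} for the inverse image. The second step is to replay the induction of Lemma \ref{l-formula inversa} and establish, for every temporal formula $\varphi$ (without negation), the two symmetric inclusions
\[
R^{-1}\sema{\varphi}{Y}\subseteq\sema{\varphi^{-1}}{X}\qquad\text{and}\qquad R\sema{\varphi}{X}\subseteq\sema{\varphi^{*}}{Y}.
\]
The base cases follow from the definitions of direct and inverse image; the boolean connectives use monotonicity; the $\bigcirc$ case is exactly Lemmas \ref{lem-inv simulaciones} and \ref{lem-dir simulaciones}; and the $\Box$, $\Diamond$, $\U$ cases fall out of the previous step once we unfold their fixpoint definitions given in Section~2.

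Reflection and preservation then follow immediately: if $xRy$ and $y\models\varphi$, then $x\in R^{-1}\sema{\varphi}{Y}\subseteq\sema{\varphi^{-1}}{X}$, and the symmetric inclusion yields preservation in the same way as in the bisimulation case (equivalently, one observes that the inverse of an \bfup{Order}-simulation is again an \bfup{Order}-simulation and applies reflection to it).

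The real obstacle has already been overcome by the design of \bfup{Order} itself: the previous subsection showed that down-closed orders deliver only one half of Lemma \ref{lem-dir} and up-closed orders only the other, which is why the earlier theorems had to exclude $\Diamond$ and $\U$. By restricting both the functors and their orders as in Definition \ref{d-ordenes}, we recover both halves simultaneously, and from that point the proof is mechanical. As in Theorem \ref{t-bis}, negation still has to be tacitly excluded, since $R$-transfer does not commute with complement for any functorial order whatsoever; this is a feature of Jacobs' logic lacking atomic propositions rather than a limitation of \bfup{Order}.
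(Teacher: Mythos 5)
Your proposal is correct and follows essentially the same route as the paper, which obtains this theorem as a ``straightforward result'' of Lemmas \ref{lem-inv simulaciones} and \ref{lem-dir simulaciones} by transporting the entire scaffolding of Section~3 (the invariance lemma, the two inclusions of Lemma \ref{l-formula inversa}, and the structural induction behind Theorem \ref{t-bis}) once both halves of the Lemma \ref{lem-dir} analogue are available. Your explicit caveat that negation must still be excluded is consistent with the paper's standing convention for the logic without atomic propositions, even though the theorem statement omits the qualifier.
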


\section{Including atomic propositions}

A consequence of the fact that the logic proposed by Jacobs
does not introduce atomic propositions was the need of giving
non-standard definitions of reflection and preservation of
properties. Kurz, in his work \cite{Kurz-thesis} includes atomic
propositions in a temporal logic for coalgebras by means of
natural transformations.

\begin{definition}\label{d-sintaxis ext}
Given a set $AP$ of atomic propositions, the formulas of the
temporal logic associated to a coalgebra $c:X\lra FX$ are given
by the BNF expression:
\begin{displaymath}
\varphi= p\mid \neg \varphi \mid \varphi\vee \varphi \mid
\varphi\wedge \varphi \mid \varphi\Rightarrow \varphi \mid
\bigcirc \varphi \mid \Diamond \varphi \mid \Box \varphi \mid
\varphi \U \varphi
\end{displaymath}
where $p\in AP$ is an atomic proposition.
\end{definition}

Kurz also defines when a state $x$ satisfies an atomic proposition
$p$, that is, he defines the semantics of an atomic proposition.

\begin{definition}\label{d-at}
Let $F:\Sets\lra\Sets$ be a functor and $AP$ a set of atomic
propositions. Let $\nu: F\Rightarrow\ps(AP)$ be a natural
transformation and $c:X\lra FX$ a coalgebra. We say that $x$
satisfies an atomic proposition $p\in AP$, and denote it $x\models
p$, when $p\in(\nu_{X}\circ c)(x)$. This way $\sem{p}=\{x\mid p\in(\nu_{X}\circ
c)(x)\}$.
\end{definition}

Notice that in fact this defines not only a semantics but a family
of possible semantics that depends on the natural transformation.
For example, we can define a natural transformation for the
functor for Kripke structures in this way:
\begin{displaymath}
\begin{array}{lclc}
\nu_{X}: & \mathcal{P}(AP)\times\mathcal{P}(X) & \longrightarrow & \mathcal{P}(AP)\\
  & (P,Q) & \longmapsto & P
\end{array}
\end{displaymath}

\noindent With $\nu_{X}$ we have characterized the standard
semantics of LTL for Kripke structures. Analogously, we
could define the following interpretation:
$\nu'_{X}(P,Q)=\ps(AP)\setminus P$.

Introducing in our temporal logic the semantics of the atomic
propositions, we can prove the following theorem involving
bisimulations:

\begin{theorem}\label{p-bisimulaciones y at}
Let $R$ be a bisimulation between coalgebras $c:X\lra FX$ and
$d:Y\lra FY$. Let $\varphi$ be a temporal formula; then, the
following is true for all $x\in X$ and $y\in Y$ such that $x R y$:
\begin{displaymath}
x\in\llbracket \varphi \rrbracket_{X}\quad
\Longleftrightarrow\quad y\in\llbracket \varphi \rrbracket_{Y}\,.
\end{displaymath}
\end{theorem}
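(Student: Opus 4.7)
The plan is to prove both directions simultaneously by structural induction on $\varphi$, which removes the asymmetry that forced the exclusion of negation in Theorem \ref{t-bis}. The genuinely new ingredient, relative to the proof of Theorem \ref{t-bis}, is the base case for atomic propositions; the inductive cases for the temporal connectives can then be adapted from the arguments already given, now with the comfort of a biconditional induction hypothesis.

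For the base case $\varphi = p \in AP$, since $R$ is a bisimulation and $xRy$, we have $(c(x),d(y)) \in \rel{F}(R)$, so there exists $w \in F(R)$ with $F(r_1)(w) = c(x)$ and $F(r_2)(w) = d(y)$, where $r_1 : R \lra X$ and $r_2 : R \lra Y$ are the projections. Regarding $\ps(AP)$ as a constant functor, naturality of $\nu : F \Rightarrow \ps(AP)$ applied to $r_1$ and $r_2$ yields $\nu_X \circ F(r_1) = \nu_R = \nu_Y \circ F(r_2)$. Hence
\[
\nu_X(c(x)) = \nu_R(w) = \nu_Y(d(y))\,,
\]
so $p \in (\nu_X \circ c)(x)$ iff $p \in (\nu_Y \circ d)(y)$, as required.

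The Boolean cases (including negation) are immediate from the induction hypothesis, which is already a biconditional. For $\bigcirc\varphi$: if $x \in \sema{\bigcirc\varphi}{X}$, then $c(x) \in \pred{F}(\sema{\varphi}{X})$, and by Lemma \ref{lem-dir} we get $d(y) \in \pred{F}(R\sema{\varphi}{X})$. The induction hypothesis gives $R\sema{\varphi}{X} \subseteq \sema{\varphi}{Y}$, and by monotonicity of $\pred{F}$ we conclude $d(y) \in \pred{F}(\sema{\varphi}{Y})$; the converse direction uses the other half of Lemma \ref{lem-dir} symmetrically. For $\Box\varphi$: if $x \in \sema{\Box\varphi}{X}$, there is an invariant $Q \subseteq \sema{\varphi}{X}$ with $x \in Q$; by Lemma \ref{lem-invariant}, $RQ$ is an invariant, and $y \in RQ \subseteq R\sema{\varphi}{X} \subseteq \sema{\varphi}{Y}$, so $y \in \sema{\Box\varphi}{Y}$. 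The cases for $\Diamond$ and $\U$ then follow from the fixed-point definitions together with the already-handled nexttime, Boolean, and negation cases.

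The main obstacle is the atomic-proposition base case: it is the only step that requires genuinely new reasoning beyond Theorem \ref{t-bis}, and it is precisely where the naturality of $\nu$ is indispensable to extract, from the abstract condition $(c(x),d(y)) \in \rel{F}(R)$, the concrete agreement of the atomic propositions observed at $x$ and at $y$. Once this biconditional is established at the leaves of the syntax tree, the induction runs uniformly through all connectives, and the earlier difficulty with negation simply disappears.
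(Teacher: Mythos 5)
Your proof is correct and follows essentially the same route as the paper's: structural induction on $\varphi$ proving the biconditional, with naturality of $\nu$ yielding $\nu_X(c(x))=\nu_R(w)=\nu_Y(d(y))$ in the atomic base case, the biconditional induction hypothesis disposing of negation, and Lemma~\ref{lem-invariant} handling $\Box$ via direct/inverse images of invariants. The only divergence is the $\bigcirc$ case, where the paper runs a fresh structural induction on the functor $F$ while you reuse Lemma~\ref{lem-dir} together with monotonicity of $\pred{F}$ and the induction hypothesis in the form $R\sema{\varphi}{X}\subseteq\sema{\varphi}{Y}$; both are valid and yours is slightly more economical.
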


Here we have captured in the same theorem the classical ideas of
reflection and preservation of properties: we have some property
in the lefthand side of a bisimulation if and only if we have the
property in its righthand side. In this case the theorem is true
also for the negation operator thanks to the atomic propositions.
Intuitively, this is because now we have an ``if and only if''
theorem, whereas in Theorem \ref{t-bis} we needed to reason
separately for each implication using monotonicity, and negation
lacks it. Also notice that even though we could think that in Theorem
\ref{t-bis} our predicates played the role of atomic
propositions, there are some essential differences: first, predicates are not
independent of each other, unlike atomic propositions, and secondly, while
atomic propositions stay the same predicates vary with each set of states.

\begin{proof}
Once again the proof will proceed by structural induction on the
formula $\varphi$. We only show some of the cases (the complete proof can be
found in \cite{FdFP07d}).
\begin{enumerate}
\item Let $\varphi = p$ where $p$ is an arbitrary atomic proposition. This way
we have the following diagram, for $\nu$ an arbitrary natural
trasformation:

\[
\xymatrix@R=5.0ex{
 {X}\ar[d]_{c}       & {R} \ar[d]_{[c,d]}\ar[l]_{\pi_1}\ar[r]^{\pi_2}        &
Y\ar[d]_{d} \\
 {FX}\ar[d]_{\nu_X} & {FR} \ar[l]_{F\pi_1}\ar[r]^{F\pi_2}\ar[d]_{\nu_R} &
{FY}\ar[d]_{\nu_Y} \\
 {\ps(AP)}               & {\ps(AP)} \ar[l]_{id}\ar[r]^{id}               &
\ps(AP)
 }
\]

This diagram is commutative. Indeed, since $R$ is a bisimulation
the upper side commutes, while the lower side commutes because
$\nu$ is a natural transformation.

So, $x\in\sema{\varphi}{X}$ means by definition that $p\in(\nu_{X}\circ
c)(x)$. Since the diagram commutes then $p\in(\nu_{R}\circ [c,d])(x,y)\;
\Leftrightarrow\; p\in(\nu_{Y}\circ d)(y)$, that is, $y\in\sema{\varphi}{Y}$.

\item Let us suppose $\varphi=\neg\varphi_{0}$. In this case we must show
that $x\in\neg\sema{\varphi_0}{X}$ if and only if
$y\in\neg\sema{\varphi_0}{Y}$, that is, we must see
that $x\notin\sema{\varphi_0}{X}$ if and only if
$y\notin\sema{\varphi_0}{Y}$. By induction hypothesis we
have $x\in\sema{\varphi_0}{X}$ if and only if $y\in\sema{\varphi_0}{Y}$.

\item Let us suppose now that $\varphi=\bigcirc\varphi_{0}$. We
must prove that $x\in\bigcirc\sema{\varphi_{0}}{X}$ is equivalent
to $y\in\bigcirc\sema{\varphi_{0}}{Y}$, that is,
$c(x)\in\pred{F}(\sema{\varphi_{0}}{X})$ is equivalent to
$d(y)\in\pred{F}(\sema{\varphi_{0}}{Y})$. The latter will be
proved by structural induction on the functor $F$. As an example we show the
case of $F=G^A$. Let us prove only one implication since the
other
one is almost identical. We have
\begin{displaymath}
\pred{F}(\sema{\varphi_{0}}{X})=\{f\mid\forall a\in A.\; f(a)\in
\pred{G}(\sema{\varphi_{0}}{X})\}\,.
\end{displaymath}
Once again, as we have shown in other proofs, we define for each
$a\in A$ and each $F$-coalgebra $c:X\lra F(X)$ a $G$-coalgebra,
$c^{a} : X\lra G(X)$ where for each $x\in X$ we have $c^{a}(x) =
c(x)(a)$. In this way, we have $xR y$ and $c^{a}(x) =
c(x)(a)\in\pred{G}(\sema{\varphi_{0}}{X})$. By induction
hypothesis we have that
$d^{a}(y)\in\pred{G}(\sema{\varphi_{0}}{Y})$. Since this is a
valid argument for all $a\in A$, we obtain
$d(y)\in\pred{F}(\sema{\varphi_{0}}{Y})$.

\item $\varphi=\Box\varphi_{0}$. Assuming that
$x\in\sema{\varphi}{X}$ we get that there exists
\begin{displaymath}
Q\subseteq X\; \textrm{an invariant for $c$ with}\;
Q\subseteq\sema{\varphi_{0}}{X}\;\textrm{and}\; x\in Q.
\end{displaymath}
Now, $R Q$ is a invariant for $d$ and, also, such that
$RQ\subseteq\sema{\varphi_{0}}{Y}$ with $y\in RQ$. Indeed, if
$x\in Q$ then $y\in RQ$ and if $b\in RQ$ there must exists some
$a\in Q\subseteq\sema{\varphi_{0}}{X}$ such that $aRb$. So, by
induction hypothesis we get that $b\in\sema{\varphi_{0}}{Y}$

On the other hand, if $y\in\sema{\varphi}{Y}$ there must exists
some invariant $T$ on $Y$, such that
$T\subseteq\sema{\varphi_{0}}{Y}$ with $y\in T$, hence for proving
$x\in\sema{\varphi}{X}$ it is enough to consider the invariant
$R^{-1}T$. \qed
\end{enumerate}
\end{proof}

To obtain a similar result for simulations, we will need again to
restrict the class of functors and orders as we did in Sections
\ref{sec-orders} and \ref{sec-functors}. In particular we are
interested in the following antimonotonicity property: if $u\< u'$ then
$\nu(u')\subseteq \nu(u)$.

\begin{definition}
Let $F:\Sets\lra\Sets$ be a functor, $AP$ a set of atomic
propositions and $\nu: F\Rightarrow\ps(AP)$ a natural
transformation. We say that $\<$ is a down-natural $\nu$-order if,
whenever $u\< u'$ then $\nu(u')\subseteq\nu(u)$.
\end{definition}

Obviously this definition depends on the natural transformation
that we consider in each case. For example, for Kripke structures
we have the following natural transformation:
$\nu_{X}((A_X,B_X))=A_X\subseteq AP$. To
obtain a down-natural $\nu$-order the following must hold: $(u,v)\< (u',v')$
then $\nu((u',v'))\subseteq\nu((u,v))$, that is, it will be enough to require
$(u,v)\<(u',v')$ iff $u'\subseteq u$.

This way, if we combine the down-closed and the down-natural
orders we get:
\begin{displaymath}
\textrm{If}\quad (u,v)\<(u',v')\quad \textrm{then}\quad
u'\subseteq u\; \textrm{and}\; v\subseteq v'\,.
\end{displaymath}

This characterization is not as restrictive as one could think.
Indeed, if we recall the definition of functorial order we had:
\[
\xymatrix@R=4.0ex{
  & {\bfup{PreOrd}}\ar[d]^{\mathit{forget}} \\
 {\Sets}\ar[ur]^{\<} \ar[r]_{F} & \Sets
}
\]

\noindent This diagram means that the functor $F$ and the order
$\<$ almost have the same structure and indeed, we could use a
natural transformation between $\<$ and $\ps(AP)$ in Definition
\ref{d-at} instead of a natural transformation between $F$ and
$\ps(AP)$, that is, $\nu:\<\Rightarrow\ps(AP)$. Considering $\nu$
in this way, an immediate consequence is that if we take as order in
$\ps(AP)$ the relation $\supseteq$ (as is done in \cite{Palomino-Thesis}), then
$u\< v$
implies $\nu(u)\,\<\nu(v)$.

We can tackle the proof of reflection of properties (with atomic
propositions) by simulations as we did in Section
\ref{sec-orders}, imposing to the order not only to be
down-natural but also down-closed. But, if we do that we will find
the same difficulties we faced in Section \ref{sec-orders}
(that is, we would not be able to prove reflection of formulas
built with the operators \emph{until} and \emph{eventually}).
Therefore, we must restrict the class of functors and orders, as we
did with the class \textbf{Order} in Section \ref{sec-functors},
but imposing also that the orders must be down-natural.

\begin{definition}
The class \bfup{Down-Natural $\nu$-Order} is the subclass of
\bfup{Order} where all orders are down-natural.
\end{definition}

Notice that we are defining a different class for each natural
transformation $\nu$. Under this condition we state the
corresponding theorem involving simulations and the reflection of
properties (with atomic propositions); for the proof see
\cite{FdFP07d}.

\begin{theorem}
Let $R$ be a simulation between coalgebras $c:X\lra FX$ and
$d:Y\lra FY$ on the same polynomial functor $F$ from $\Sets$ to
$\Sets$ belonging to the class \bfup{Down-Natural $\nu$-Order} and let
$\varphi$ be a temporal formula. Then, for each $x\in X$ and $y\in
Y$ such that $x R y$:
\begin{displaymath}
y\in\sema{\varphi}{Y}\quad \Longrightarrow\quad
x\in\sema{\varphi}{X}\,.
\end{displaymath}
\end{theorem}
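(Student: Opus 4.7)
The plan is to proceed by structural induction on $\varphi$, closely mirroring the scheme used in the proof of Theorem \ref{p-bisimulaciones y at} but replacing the bisimulation-specific Lemma \ref{lem-dir} by the pair of \bfup{Order}-tailored Lemmas \ref{lem-inv simulaciones} and \ref{lem-dir simulaciones}. The only genuinely new ingredient lives in the base case, where the simulation structure must be combined with the down-natural property of $\nu$.

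For the base case $\varphi = p$, I would unfold the simulation condition to obtain $w \in F(R)$ with $c(x) \< F(r_1)(w)$ and $F(r_2)(w) \< d(y)$. Since $\nu : F \Rightarrow \ps(AP)$ is a natural transformation whose codomain is constant on morphisms, we have $\nu_X(F(r_1)(w)) = \nu_R(w) = \nu_Y(F(r_2)(w))$. Applying down-naturality to each of the two $\<$-comparisons yields
\[ p \in \nu_Y(d(y)) \subseteq \nu_Y(F(r_2)(w)) = \nu_X(F(r_1)(w)) \subseteq \nu_X(c(x)), \]
so $x \in \sema{p}{X}$, as desired.

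The inductive cases for $\wedge$, $\vee$ and $\Rightarrow$ reduce directly to the IH by their classical semantics. For $\varphi = \bigcirc \varphi_0$, from $d(y) \in \pred{F}(\sema{\varphi_0}{Y})$ Lemma \ref{lem-inv simulaciones} gives $c(x) \in \pred{F}(R^{-1}\sema{\varphi_0}{Y})$; the IH together with monotonicity of $\pred{F}$ then yields $c(x) \in \pred{F}(\sema{\varphi_0}{X})$, i.e., $x \in \sema{\bigcirc \varphi_0}{X}$. For $\varphi = \Box \varphi_0$ a witnessing invariant $T \subseteq \sema{\varphi_0}{Y}$ with $y \in T$ is transported back along $R^{-1}$: the \bfup{Order} analogue of Lemma \ref{lem-invariant}, obtained by substituting Lemma \ref{lem-inv simulaciones} for the bisimulation step in the proof of Lemma \ref{l-invariantes y sim}, shows $R^{-1}T$ is an invariant for $c$, and the IH places it inside $\sema{\varphi_0}{X}$ while keeping $x \in R^{-1}T$. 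The until and eventually operators are unfolded via their standard fixed-point/negation definitions.

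The main obstacle is the negation case: reflecting $\neg \varphi_0$ is equivalent to preserving $\varphi_0$, which the reflection-only IH does not provide. The remedy is to strengthen the induction to the biconditional $y \in \sema{\varphi_0}{Y} \Leftrightarrow x \in \sema{\varphi_0}{X}$; Lemmas \ref{lem-inv simulaciones} and \ref{lem-dir simulaciones} were designed in exactly dual pairs precisely so that both directions carry through every temporal inductive step in \bfup{Order}. The remaining subtlety is to secure the preservation half of the atomic-proposition base case, which requires exploiting the rigidity built into the recursive definition of \bfup{Order} — where $\<$ collapses to equality on the identity, product and powerset constituents and non-trivial comparability is confined to constant components with their preorder $\leq_A$ — to conclude that $\<$-comparable elements have equal $\nu$-images. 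Once this is granted, the biconditional induction closes uniformly and the theorem follows as its reflection half.
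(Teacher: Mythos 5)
Your overall plan is sound for the positive fragment, and since the paper itself defers the proof to \cite{FdFP07d} there is nothing to compare line by line; your base case is exactly the right argument. Naturality of $\nu$ into the constant functor $\ps(AP)$ gives $\nu_X(Fr_1(w))=\nu_R(w)=\nu_Y(Fr_2(w))$, and down-naturality applied to $c(x)\< Fr_1(w)$ and $Fr_2(w)\< d(y)$ yields the chain of inclusions you write, so atoms are reflected. The cases $\bigcirc$ and $\Box$ via Lemma \ref{lem-inv simulaciones} and the transported invariant $R^{-1}T$, and the cases $\Diamond$ and $\U$ via the \emph{pair} of Lemmas \ref{lem-inv simulaciones} and \ref{lem-dir simulaciones} (which is precisely why the class \bfup{Order} is needed rather than mere down-closedness), all go through as you sketch.

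The genuine gap is your treatment of negation. You propose to strengthen the induction to a biconditional and justify its preservation half at the atoms by claiming that the rigidity of \bfup{Order} forces $\<$-comparable elements to have equal $\nu$-images. This is false. Non-trivial comparability in \bfup{Order} is indeed confined to the constant components, but on a constant component $A$ the only constraint that down-naturality places on $\nu:A\lra\ps(AP)$ is antitonicity, not constancy on comparability classes. The paper's own Kripke example, $(u,v)\<(u',v')$ iff $u'\subseteq u$ (with $v=v'$), is a down-natural \bfup{Order}-order in which comparable elements have strictly nested $\nu$-images; and if your claim held, the classes \bfup{Down-Natural $\nu$-Order} and \bfup{Up-Natural $\nu$-Order} would coincide, contradicting the paper's explicit remark that only one implication is provable here because these orders ``have a natural direction.'' Concretely: take the constant functor $F=A$ with $A=\{a,a'\}$, $a\leq a'$, $\nu(a)=\{p\}$, $\nu(a')=\emptyset$, coalgebras $c(x)=a$, $d(y)=a'$, and $R=\{(x,y)\}$. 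This is a simulation for a down-natural \bfup{Order}-order, yet $y\models\neg p$ while $x\not\models\neg p$. So the biconditional already fails for atoms, your negation case cannot be closed, and in fact the statement itself is false if $\varphi$ may contain negation: the theorem must be read, as Theorem \ref{t-bis} is, as restricted to negation-free formulas, and your proof should simply omit the negation case rather than attempt to repair it.
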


We showed above that simulations for functors in the class
\textbf{Order} reflected and preserved all kinds of properties. Instead, now we
can only prove one implication, that corresponding to the reflection of
properties. This is so because down-natural $\nu$-orders have a natural
direction.

Exactly in the same way as we did with down-natural $\nu$-orders, we
can define the corresponding class of up-natural $\nu$-orders:

\begin{definition}
Let $F:\Sets\lra \Sets$ be a functor, $AP$ a set of atomic
propositions and $\nu: F\Rightarrow\ps(AP)$ a natural
transformation. We say that $\<$ is an up-natural $\nu$-order if $u\<
u'$ implies $\nu(u)\subseteq\nu(u')$.
\end{definition}

As we did for down-natural $\nu$-orders, we define a subclass
of \textbf{Order}:

\begin{definition}
The class \bfup{Up-Natural $\nu$-Order} is the subclass of \bfup{Order}
where all orders are up-natural.
\end{definition}

\begin{theorem}
Let $R$ be a simulation between coalgebras $c:X\lra FX$ and
$d:Y\lra FY$ on the same polynomial functor $F$ in the class
\bfup{Up-Natural $\nu$-Order}, and let $\varphi$ be a temporal formula.
Then, for all $x\in X$ and $y\in Y$ such that $xR y$:
\begin{displaymath}
x\in\sema{\varphi}{X}\quad \Longrightarrow\quad
y\in\sema{\varphi}{Y}\,.
\end{displaymath}
\end{theorem}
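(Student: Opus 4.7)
The plan is to derive this theorem from the preceding down-natural version by the same duality trick that Section~\ref{sec-orders} already used to pass from reflection (down-closed) to preservation (up-closed). The key observation there, which I shall reuse, is that if $R$ is a simulation between $c$ and $d$ for an order $\<$, then $R^{-1}$ is a simulation between $d$ and $c$ for the opposite order $\<^{op}$.

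First, I would check that the class \bfup{Order} is closed under passage to the opposite order, by a routine structural induction on the six clauses of Definition~\ref{d-ordenes}: the opposite of a constant preorder is again a preorder; equality is self-opposite; the product, exponentiation and coproduct orders are defined component-wise, so their opposites coincide with the analogous constructions applied to the opposite component orders; and the powerset clause is phrased symmetrically in its two universally quantified halves, so replacing $\<^F$ by $\<^{F,op}$ yields precisely the opposite of the original relation. Hence $\<\,\in\bfup{Order}$ implies $\<^{op}\,\in\bfup{Order}$.

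Second, I would verify that up-natural for $\nu$ becomes down-natural for $\nu$ under the opposite order: the implication $u\<u'\Rightarrow\nu(u)\subseteq\nu(u')$ is by definition the same as $u'\<^{op}u\Rightarrow\nu(u)\subseteq\nu(u')$, which is exactly the down-natural condition for $\<^{op}$. Combined with the previous paragraph, this shows that if $\<$ lies in \bfup{Up-Natural $\nu$-Order}, then $\<^{op}$ lies in \bfup{Down-Natural $\nu$-Order}.

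Finally, given $xRy$ with $x\in\sema{\varphi}{X}$, I would apply the preceding theorem to the simulation $R^{-1}$ between $d$ and $c$ carrying the order $\<^{op}$: since $yR^{-1}x$ and the conclusion of that theorem, read with the roles of $(X,c)$ and $(Y,d)$ swapped, is precisely $x\in\sema{\varphi}{X}\Rightarrow y\in\sema{\varphi}{Y}$, we are done. The only step with any real content is the structural induction of the first paragraph; within it, the only mildly subtle point is that the symmetry of the powerset clause really survives the passage to opposites, but since both halves of that clause have exactly the same shape this is essentially immediate.
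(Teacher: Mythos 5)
Your proposal is correct and follows essentially the route the paper itself establishes: the up-natural case is obtained from the down-natural theorem by passing to $R^{-1}$ with the opposite order, exactly as the paper already did in Section~\ref{sec-orders} to derive the up-closed preservation theorem from Theorem~\ref{t-sim1}. Your explicit verification that \bfup{Order} is closed under opposites (including the symmetry of the powerset clause) and that up-natural for $\<$ is down-natural for $\<^{op}$ supplies precisely the details the paper leaves implicit.
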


\section{Conclusions}

The main goal of this paper was to study under what assumptions
coalgebraic simulations reflect properties. In our way towards the
proof of this result, we were also able to prove reflection and
preservation of properties by coalgebraic bisimulations. For
expressing the properties we used Jacobs' temporal logic
\cite{Jacobs07}, later extended with atomic propositions using the
idea presented in \cite{Kurz-thesis}.

That coalgebraic bisimulations reflect and preserve properties
expressed in modal logic is a well-known topic (e.g,
\cite{Jacobs00,Kurz-thesis,Pattinson-thesis}), but not so the
corresponding results for simulations. The main difficulty is that
Hughes and Jacobs' notion of simulation is defined by means of an
arbitrary functorial order which bestows them with a high degree
of freedom. We have dealt with this by restricting the class of
functorial orders (although even so we are not able of obtaining a
general result) and by restricting also the class of allowed
functors.

In order to get more general results on the subject, an
interesting path that we intend to explore is the search for a
canonical notion of simulation. This definition would provide us,
not only with a ``natural'' way to understand simulations but,
hopefully, would also give rise to ``natural" general results
about reflection of properties.

Another promising direction of research is the study of reflection
and preservation of properties in probabilistic systems, following
our results of \cite{dFPF07b} in combination with the ideas
presented in \cite{Hasuo06a,VR97,BSV04}.

\noindent
\textbf{Acknowledgement}

The authors would like to thank the anonymous referees for their comments and
suggestions.

%\bibliography{biblio}
%\bibliographystyle{plain}

\end{document}